\newcommand{\lyxaddress}[1]{
\par {\raggedright #1
\vspace{1.4em}
\noindent\par}
}
\theoremstyle{plain}
\newtheorem{thm}{\protect\theoremname}
  \theoremstyle{plain}
  \newtheorem{lem}[thm]{\protect\lemmaname}
\def\frontmatter@abstractheading{}
\newcommand{\Force}{\mathsf{Force}}
\date{}
  \providecommand{\lemmaname}{Lemma}
\providecommand{\theoremname}{Theorem}
\begin{document}

\title{No-Forcing and No-Matching Theorems for Classical Probability Applied
to Quantum Mechanics}

\author{Ehtibar N. Dzhafarov\textsuperscript{1} and Janne V. Kujala\textsuperscript{2}}

\maketitle

\lyxaddress{\begin{center}
\textsuperscript{1}Purdue University, ehtibar@purdue.edu \\\textsuperscript{2}University
of Jyv\"askyl\"a, Department of Mathematical Information Technology,
jvk@iki.fi
\par\end{center}}
\begin{abstract}
\mbox{}

Correlations of spins in a system of entangled particles are inconsistent
with Kolmogorov's probability theory (KPT), provided the system is
assumed to be non-contextual. In the Alice-Bob EPR paradigm, non-contextuality
means that the identity of Alice's spin (i.e., the probability space
on which it is defined as a random variable) is determined only by
the axis $\alpha_{i}$ chosen by Alice, irrespective of Bob's axis
$\beta_{j}$ (and vice versa). Here, we study contextual KPT models,
with two properties: (1) Alice's and Bob's spins are identified as
$A_{ij}$ and $B_{ij}$, even though their distributions are determined
by, respectively, $\alpha_{i}$ alone and $\beta_{j}$ alone, in accordance
with the no-signaling requirement; and (2) the joint distributions
of the spins $A_{ij},B_{ij}$ across all values of $\alpha_{i},\beta_{j}$
are constrained by fixing distributions of some subsets thereof. Of
special interest among these subsets is the set of probabilistic connections,
defined as the pairs $\left(A_{ij},A_{ij'}\right)$ and $\left(B_{ij},B_{i'j}\right)$
with $\alpha_{i}\not=\alpha_{i'}$ and $\beta_{j}\not=\beta_{j'}$
(the non-contextuality assumption is obtained as a special case of
connections, with zero probabilities of $A_{ij}\not=A_{ij'}$ and
$B_{ij}\not=B_{i'j}$). Thus, one can achieve a complete KPT characterization
of the Bell-type inequalities, or Tsirelson's inequalities, by specifying
the distributions of probabilistic connections compatible with those
and only those spin pairs $\left(A_{ij},B_{ij}\right)$ that are subject
to these inequalities. We show, however, that quantum-mechanical (QM)
constraints are special. No-forcing theorem says that if a set of
probabilistic connections is not compatible with correlations violating
QM, then it is compatible only with the classical-mechanical correlations.
No-matching theorem says that there are no subsets of the spin variables
$A_{ij},B_{ij}$ whose distributions can be fixed to be compatible
with and only with QM-compliant correlations. 

\mbox{}$ $

\textsc{Keywords:} CHSH inequalities; contextuality; EPR/Bohm paradigm;
Fine's theorem; joint distribution; probabilistic couplings; probability
spaces; random variables; Tsirelson inequalities.

\markboth{Dzhafarov and Kujala}{No-Forcing and No-Matching Theorems}
\end{abstract}

\section{Introduction}

\begin{figure}
\begin{centering}
\includegraphics[scale=0.55]{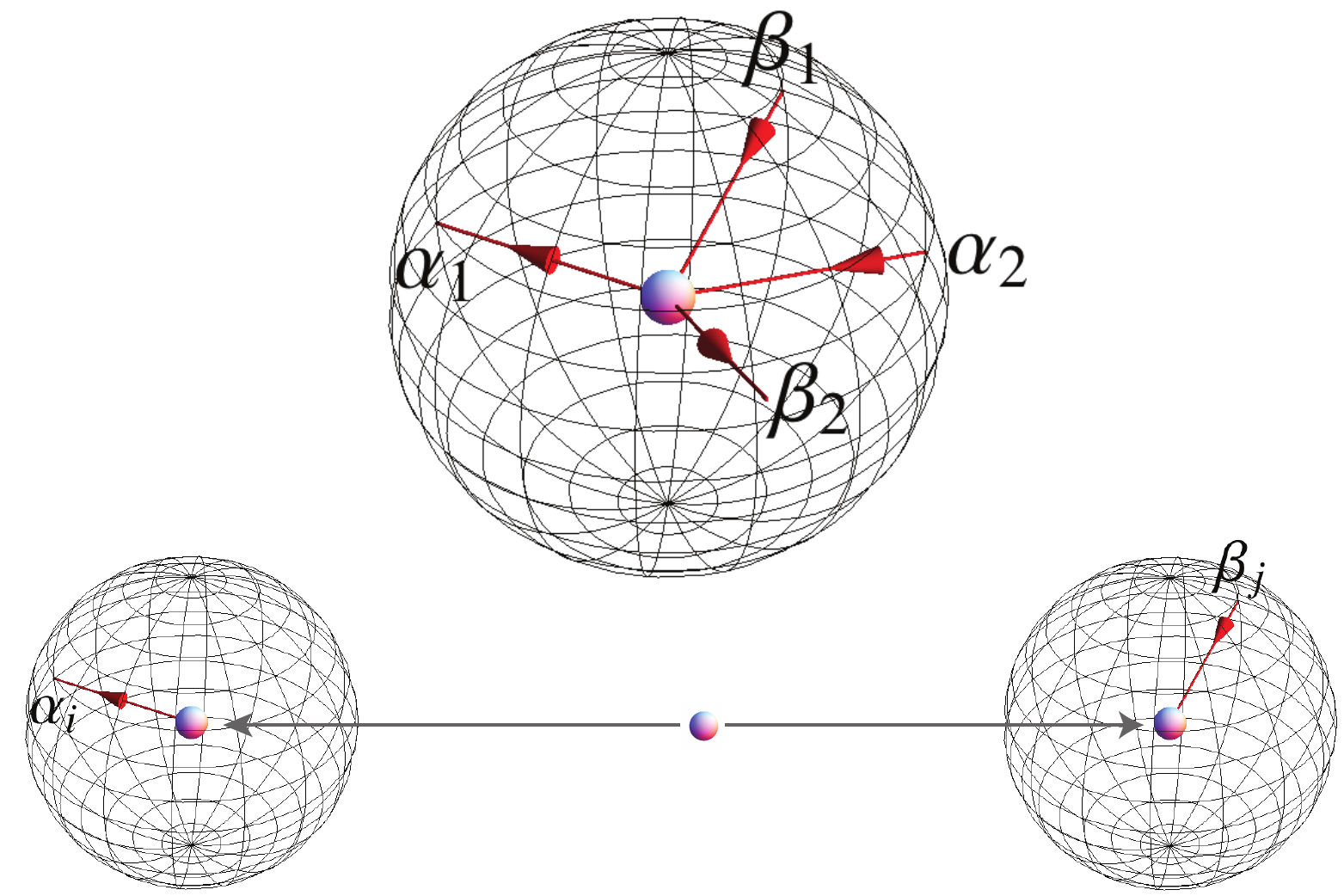}
\par\end{centering}

\caption[.]{The experimental paradigm considered in this paper. Two spin-half
particles created in a singlet state are running away from each other.
Each particle has its spin measured along one of two axes (measurement
settings): $\alpha_{1}$ or $\alpha_{2}$ for ``Alice's'' particle
(left), and $\beta_{1}$ or $\beta_{2}$ for ``Bob's'' particle
(right). Each measurement results in a random variable attaining one
of two values, $+1$ (spin-up, shown by outward-pointing cones) or
$-1$ (spin-down, inward-pointing cones). We confine the consideration
to the case when these values are equiprobable (so the no-signaling
requirement is satisfied trivially).\label{fig:Schematic-representation-of}}
\end{figure}

\textcolor{black}{Half a century }ago John Bell {[}\emph{\ref{enu:Bell,-J.-(1964).}}{]}
posed and answered in the negative the question of whether probability
distributions of spins in entangled particles could be accounted for
by a model written in the language of classical probability. These
distributions being among the most basic predictions of QM, Bell's
theorem and its subsequent elaborations {[}\emph{\ref{enu:J.F.-Clauser,-M.A.}-\ref{enu:Fine,-A.-(1982b).}}{]}
seem to mathematically isolate quantum determinism from the probabilistic
forms of classical determinism, and establish the necessity for a
quantum probability theory that is not reducible to the classical
one. However, Bell-type theorems do not engage the full potential
of the classical probability theory, if the latter is understood as
the theory adhering to Kolmogorov's conceptual framework {[}\emph{\ref{enu:A.-Kolmogorov,-Foundations}}{]}.
The use of probability theory in Bell-type theorems is constrained
by the following assumption:\\

\hangindent=15pt\hangafter=1\noindent(\emph{Non-Contextuality}, NC)
A spin of a given particle is a random variable whose \emph{identity}
does not depend on measurement settings (axes) chosen for other particles.\\ 

\noindent \textcolor{black}{This meaning of NC differs from the descriptions
and definitions found in the literature {[}}\textcolor{black}{\emph{\ref{enu:S.-Kochen,-F.}-\ref{enu:A.-Cabello:-Simple}}}\textcolor{black}{{]},
but all of them agree that Bell-type theorems are predicated on NC.
To understand our definition and why it leads to the Bell-type theorems,
we need to recapitulate basic facts about random variables in KPT.
Although the discussion can be conducted on a very high level of generality
{[}}\textcolor{black}{\emph{\ref{enu:E.N.-Dzhafarov,-J.V.PLOS}-\ref{Dzhafarov,-E.N.,-&Advances}}}\textcolor{black}{{]},
in this paper we confine it to the simplest Bohmian version of the
EPR paradigm {[}}\textcolor{black}{\emph{\ref{enu:D.-Bohm,-&}}}\textcolor{black}{{]},
depicted in Figure$\:$\ref{fig:Schematic-representation-of}. }For
each of the four combined settings $\left(\alpha_{i},\beta_{j}\right)$,
the recorded spins form a random pair $\left(A,B\right)$. \emph{No-signaling
requirement} (forced by special relativity if the two particles are
separated by a space-like interval, but usually assumed to hold even
if they are not) means that the distribution of $A$ does not depend
on $\beta_{j}$, nor the distribution of $B$ on $\alpha_{i}$: then,
by observing successive realizations of spin $A$ for a given value
of $\alpha_{i}$, Alice should never be able to guess that Bob exists. 

The no-signaling requirement is trivially satisfied if both $A$ and
$B$ are binary ($+1/-1$) random variables with 
\begin{equation}
\Pr\left[A=1\right]=\Pr\left[B=1\right]=\frac{1}{2},\label{eq:homogeneous}
\end{equation}
for all settings $\left(\alpha_{i},\beta_{j}\right)$.\textcolor{black}{{}
This is the case we confine our analysis to. }The joint distribution
of $A$ and $B$ for a given $\left(\alpha_{i},\beta_{j}\right)$
is then uniquely determined by the joint probability
\begin{equation}
p_{ij}=\Pr\left[A=1,B=1\right].\label{eq:joint}
\end{equation}

The distribution of a random variable, however, does not determine
its identity. In KPT, a binary ($+1/-1$) random variable $X$ is
identified with a mapping $f:S\rightarrow\left\{ -1,1\right\} $,
measurable with respect to some probability space $\left(S,\Sigma,\mu\right)$,
where $S$ is a set, $\Sigma$ is a set of events included in $S$,
and $\mu$ is a probability measure. The measurability of $f$ means
that $f^{-1}\left(1\right)\in\Sigma$, and the distribution of $X$
is defined by $\Pr\left[X=1\right]=\mu\left(f^{-1}\left(1\right)\right)$.
We say in this case that this random variable is \emph{defined on}
$\left(S,\Sigma,\mu\right)$. If two +1/-1 random variables, $X$
and $Y$, are \emph{jointly distributed}, then they are identified
with mappings $f:S\rightarrow\left\{ -1,1\right\} $ and $g:S\rightarrow\left\{ -1,1\right\} $,
measurable with respect to one and the same probability space $\left(S,\Sigma,\mu\right)$.
The joint distribution of these random variables is then determined
by
\begin{equation}
\Pr\left[X=\pm1,Y=\pm1\right]=\mu\left(f^{-1}\left(\pm1\right)\cap g^{-1}\left(\pm1\right)\right).
\end{equation}

The NC assumption says that the identity of $A$ at a given $\alpha_{i}$
does not depend on $\beta_{j}$, nor does the identity of $B$ at
a given $\beta_{j}$ depend on $\alpha_{i}$. If so, for any given
$\left(\alpha_{i},\beta_{j}\right)$, the output pair should be indexed
$\left(A_{i},B_{j}\right)$. Being jointly distributed, $A_{1}$ and
$B_{1}$, are defined on the same probability space, and so are $A_{1},B_{2}$,
and $A_{2},B_{1}$, and $A_{2},B_{2}$. It follows that all four random
variables $A_{1},A_{2},B_{1},B_{2}$ are defined on one and the same
$\left(S,\Sigma,\mu\right)$. That is, they are jointly distributed,
even though the joint distribution of $A_{1},A_{2}$ is not observable
in the sense of physical co-occurrence of their values, and analogously
for $B_{1},B_{2}$. The complete version of the Bell theorem {[}\emph{\ref{enu:Fine,-A.-(1982b).}}{]}
says that $A_{1},A_{2},B_{1},B_{2}$ with known distributions of $\left(A_{i},B_{j}\right)$
for $i,j\in\left\{ 1,2\right\} $ can be imposed a joint distribution
upon (i.e., $A_{1},A_{2},B_{1},B_{2}$ can be presented as measurable
functions on the same probability space) if and only if these marginal
distributions satisfy the CHSH inequalities,

\begin{equation}
0\leq p_{11}+p_{12}+p_{21}+p_{22}-2p_{ij}\leq1,\textnormal{ for all }i,j\in\left\{ 1,2\right\} ,\label{eq:Bell/CH}
\end{equation}
with $p_{ij}$ defined in (\ref{eq:joint}). In the general case we
have to replace the CHSH inequalities in the previous statement with
the conjunction of these inequalities and the no-signaling condition,
but we deal here with the special case (\ref{eq:homogeneous}), where
no-signaling is satisfied ``automatically.''

To relate this to the original language of the EPR discussions, the
statement that $A_{1},A_{2},B_{1},B_{2}$ are jointly distributed
(which is equivalent to NC) is equivalent to the ``hidden-variable''
assumption. Indeed, $A_{1},A_{2},B_{1},B_{2}$ are respectively representable
by functions $f_{1},f_{2},g_{1},g_{2}$ defined on the same $\left(S,\Sigma,\mu\right)$
if and only if 
\begin{equation}
A_{i}=f_{i}\left(R\right),\; B_{j}=g_{j}\left(R\right),\textnormal{ for all }i,j,\in\left\{ 1,2\right\} ,\label{eq:f-g}
\end{equation}
where $R$ is the random variable represented by the identity mapping
$\iota:S\rightarrow S$, $\iota\left(x\right)=x$, obviously measurable
with respect $\left(S,\Sigma,\mu\right)$. Moreover {[}\emph{\ref{enu:E.N.-Dzhafarov,-J.V.}}{]},
insofar as only binary $A_{1},A_{2},B_{1},B_{2}$ are concerned, the
``hidden variable'' $R$ can always be chosen to be a random variable
whose values are 16 possible combinations $\left(a_{1},a_{2},b_{1},b_{2}\right)=\left(\pm1,\pm1,\pm1,\pm1\right)$,
such that 
\begin{equation}
\Pr\left[R=\left(a_{1},a_{2},b_{1},b_{2}\right)\right]=\Pr\left[A_{1}=a_{1},A_{2}=a_{2},B_{1}=b_{1},B_{2}=b_{2}\right].\label{eq:f-g1}
\end{equation}
The functions $f_{i},g_{j}$ in (\ref{eq:f-g}) are then simply coordinate-wise
projections of the vectors $\left(a_{1},a_{2},b_{1},b_{2}\right)$:
\begin{equation}
f_{i}\left(a_{1},a_{2},b_{1},b_{2}\right)=a_{i},\; g_{j}\left(a_{1},a_{2},b_{1},b_{2}\right)=b_{j},\quad i,j\in\left\{ 1,2\right\} .\label{eq:f-g2}
\end{equation}

Summarizing, we have the equivalences
\begin{equation}
\begin{array}{c}
\textnormal{NC}\\
\Updownarrow\\
A_{1},A_{2},B_{1},B_{2}\textnormal{ are single-indexed }\\
\Updownarrow\\
\textnormal{ \ensuremath{A_{1},A_{2},B_{1},B_{2}\textnormal{ are jointly distributed }}}\\
\Updownarrow\\
A_{1},A_{2},B_{1},B_{2}\textnormal{ are functions of a \textquotedblleft hidden\textquotedblright\ }R\\
\Updownarrow\\
\textnormal{CHSH inequalities (\ref{eq:Bell/CH})},
\end{array}\label{eq:equivalences}
\end{equation}
of which the equivalence of the first four statements holds in KPT
essentially by definition. We know that the QM prediction for $p_{ij}$
in (\ref{eq:joint}) is, for $i,j\in\left\{ 1,2\right\} $,
\begin{equation}
p_{ij}=\nicefrac{1}{4}-\nicefrac{1}{4}\langle\alpha_{i}|\beta_{j}\rangle,\label{eq:cosine law}
\end{equation}
where $\langle\alpha_{i}|\beta_{j}\rangle$ is the cosine of the angle
between axes $\alpha_{i}$ and $\beta_{j}$, and we know that for
some choices of the axes these values of $p_{ij}$ violate (\ref{eq:Bell/CH}).
There are only two ways of dealing with this situation: to reject
KPT (replace it with a different, QM probability theory) or, if one
wishes to remain within the confines of KPT, to reject NC. The standard
QM theory has chosen the first way, we in this paper (following others,
e.g., {[}\emph{\ref{enu:A.-Yu.-Khrennikov,},\ref{enu:A.-Yu.-Khrennikov,-1}}{]})
explore the limits of the second one.

It is clear from (\ref{eq:equivalences}) that to reject NC means
to double-index $A$ or $B$ (by symmetry, $A$ and $B$). In the
Alice-Bob-paradigm considered this yields eight random variables $A_{ij}$
and $B_{ij}$, $i,j\in\left\{ 1,2\right\} $, with known joint distributions
for four pairs $\left(A_{ij},B_{ij}\right)$, 
\begin{equation}
\begin{array}{|c|c|c|}
\hline \left(\alpha_{i},\beta_{j}\right) & B_{ij}=+1 & B_{ij}=-1\\
\hline A_{ij}=+1 & p_{ij} & \nicefrac{1}{2}-p_{ij}\\
\hline A_{ij}=-1 & \nicefrac{1}{2}-p_{ij} & p_{ij}
\\\hline \end{array}\label{eq:observable probs}
\end{equation}
Since $A_{i1}$ and $A_{i2}$ are different (even if identically distributed)
random variables, and so are $B_{1j}$ and $B_{2j}$, we are no longer
forced to assume, as we were under NC, that all the random variables
in play are defined on one and the same probability space. In fact,
if we interpret a joint distribution as implying that values of jointly
distributed random variables can be observed ``together'' (e.g.,
simultaneously, in some inertial frame of reference), then $A_{i1}$
and $A_{i2}$ are not jointly distributed (and neither are $B_{1j}$
and $B_{2j}$). Our denial of NC, therefore, does not amount to admission
of a ``spooky action at a distance.'' Rather it is based on our
acknowledging, as a general principle,\\

\hangindent=\parindent\hangafter=1\noindent(\emph{Contextuality-by-Default},
CbD) No two spins recorded under different, mutually exclusive measurement
settings (across all particles involved) ever \emph{co-occur}, because
of which they are \emph{stochastically unrelated} (i.e., defined on
different probability spaces, possess no joint distribution) {[}\emph{\ref{enu:E.N.-Dzhafarov,-J.V.PLOS}-\ref{Dzhafarov,-E.N.,-&Advances}}{]}.\\

\noindent \textcolor{black}{The concept of four stochastically unrelated
to each other random pairs $\left(A_{ij},B_{ij}\right)$ is well within
the framework of KPT. Put differently, $A_{ij}$ and $B_{ij}$, being
jointly distributed for any given $\left(\alpha_{i},\beta_{j}\right)$,
can be considered functions of one and the same ``hidden'' random
variable $R_{ij}$, but KPT does not compel the four $R_{ij}$'s to
be viewed as jointly distributed. This seems to be the essence of
Gudder's analysis {[}}\textcolor{black}{\emph{\ref{enu:S.-Gudder:-On}}}\textcolor{black}{{]}
of the hidden-variable theories.} The existence of a single probability
space for all random variables imaginable is often mistakenly taken
as one of the tenets of Kolmogorov's theory. The untenability of this
view is apparent by cardinality considerations alone. Also, it is
easy to see that for any class of random variables of a given type
(e.g., binary +1/-1 ones) one can construct a variable of the same
type stochastically independent of each of them. The idea that all
such variables can be defined on a single space therefore leads to
a contradiction {[}\emph{\ref{enu:E.N.-Dzhafarov,-J.V.Qualified}}{]}. 

CbD does not mean, of course, that stochastically unrelated (i.e.,
possessing no joint distribution) spins cannot be mathematically \emph{imposed
}a joint distribution on (in the same way as dealing with spatially
disparate points on a sheet of paper does not prevent one from variously
grouping them in one's mind). In fact this can generally be done in
a variety of ways, referred to as different \emph{probabilistic couplings}
{[}\emph{\ref{enu:H.-Thorisson,-Coupling,}}{]} (p-couplings): \\

\hangindent=\parindent\hangafter=1\noindent(\emph{All-Possible-p-Couplings},
APpC) The stochastically unrelated spins recorded under different,
mutually exclusive measurement settings can be p-coupled arbitrarily,
insofar as the joint distribution imposed on them is consistent with
the observable joint distributions of the spins recorded under the
same measurement settings {[}\emph{\ref{enu:E.N.-Dzhafarov,-J.V.PLOS}-\ref{Dzhafarov,-E.N.,-&Advances}}{]}.\\

\noindent To p-couple the eight random variables $A_{ij},B_{ij}$,
$i,j,\in\left\{ 1,2\right\} $, means to specify $2{}^{8}$ probabilities
(nonnegative and summing to 1) 
\begin{equation}
\Pr\left[A_{11}=\pm1,\ldots,A_{22}=\pm1,B_{11}=\pm1,\ldots,B_{22}=\pm1\right].\label{eq:64}
\end{equation}
Each p-coupling is a way of designing a scheme of grouping realizations
of the eight random variables, \emph{as if} they co-occurred in an
imaginary experiment. Equivalently, to p-couple all $A_{ij},B_{ij}$
means to find a ``hidden'' random variable $R^{*}$ which all $A_{ij},B_{ij}$
can be presented as functions of,
\begin{equation}
A_{ij}=f_{ij}\left(R^{*}\right),\; B_{ij}=g_{ij}\left(R^{*}\right),\; i,j\in\left\{ 1,2\right\} .
\end{equation}
In particular, once the $2{}^{8}$ probabilities in (\ref{eq:64})
have been assigned to all vectors
\begin{equation}
\left(a_{11},\ldots,a_{22},b_{11},\ldots,b_{22}\right)=\left(\pm1,\ldots,\pm1,\pm1,\ldots,\pm1\right),
\end{equation}
one can choose $R^{*}$ to be the random variable with these values
and these probabilities {[}\emph{\ref{enu:E.N.-Dzhafarov,-J.V.}}{]},
the functions $f_{ij},g_{ij}$ being the coordinate-wise projections,
\begin{equation}
f_{ij}\left(a_{11},\ldots,a_{22},b_{11},\ldots,b_{22}\right)=a_{ij},\; g_{ij}\left(a_{11},\ldots,a_{22},b_{11},\ldots,b_{22}\right)=b_{ij},\; i,j\in\left\{ 1,2\right\} .
\end{equation}
Therefore, the ``hidden-variable'' meaning for a p-coupling of the
double-indexed $A,B$ is precisely the same as in the case of the
single-indexed ones, cf. (\ref{eq:f-g})-(\ref{eq:f-g2}). 

In fact, a p-coupling of the single-indexed $A_{1},A_{2},B_{1},B_{2}$
is merely a special case of p-couplings for the double-indexed $A_{11},\ldots,A_{22},B_{11},\ldots,B_{22}$.
One obtains this special case by constraining $H$ in (\ref{eq:coupling vector})
not only by the four empirically observable marginal distributions
of $\left(A_{ij},B_{ij}\right)$, but also by the additional assumption
\begin{equation}
\Pr\left[A_{i1}\not=A_{i2}\right]=0,\;\Pr\left[B_{1j}\not=B_{2j}\right]=0,\; i,j\in\left\{ 1,2\right\} .\label{eq:0-connections}
\end{equation}
Without this additional assumption a p-coupling for $A_{11},\ldots,A_{22},B_{11},\ldots,B_{22}$
always exists, whatever the observed distributions of $\left(A_{ij},B_{ij}\right)$.
For instance, one can always construct a p-coupling in which all stochastically
unrelated random pairs $\left(A_{ij},B_{ij}\right)$ are considered
stochastically independent. This is not particularly interesting,
precisely because this scheme of p-coupling is compatible with any
distributions of $\left(A_{ij},B_{ij}\right)$, QM-compliant and QM-contravening
alike.

An interesting question is whether there is a scheme by which the
stochastically unrelated random pairs $\left(A_{ij},B_{ij}\right)$
can be p-coupled so as to ``match'' QM precisely, in the sense of
allowing for all QM-compliant correlations and no other. At the end
of this paper we answer this question in the negative for all Kolmogorovian
models in which the p-couplings mentioned in APpC are constrained
by fixing distributions of some subsets of all spins involved (no-matching
theorem). 

Prior to that, however, we consider a case when these constraining
distributions are those of certain spin pairs, called \emph{connections}
{[}\emph{\ref{enu:E.N.-Dzhafarov,-J.V.PLOS}}{]}. This special class
of models is the most straightforward generalization of the models
compatible with NC. For the models with connections we prove a stronger
result (no-forcing theorem): such a model either allows for correlations
forbidden by QM or it only allows for the correlations of classical
mechanics, those satisfying the CHSH inequalities (\ref{eq:Bell/CH}).

\section{No-Forcing and No-Matching Theorems for QM}

The results of the Alice-Bob experiment depicted in Figure \ref{fig:Schematic-representation-of}
are uniquely described by the \emph{outcome vector} 
\begin{equation}
p=\left(p_{11},p_{12},p_{21},p_{22}\right).
\end{equation}
We say that $p$ is \emph{QM-compliant} if there exists some choice
of the settings $\alpha_{1},\alpha_{2},\beta_{1},\beta_{2}$ under
which $p$ satisfies (\ref{eq:cosine law}). The following inequality
is known to be a necessary and sufficient condition for $p$ being
QM-compliant {[}\emph{\ref{enu:Landau-LJ-(1988)}-\ref{enu:Kujala,-J.-V.,2008}}{]}:
\begin{equation}
\left|r_{11}r_{12}-r_{21}r_{22}\right|\leq\sqrt{1-r_{11}^{2}}\sqrt{1-r_{12}^{2}}+\sqrt{1-r_{21}^{2}}\sqrt{1-r_{22}^{2}},\label{eq:cosphericity}
\end{equation}
where 
\begin{equation}
r_{ij}=4p_{ij}-1\label{eq:correlation}
\end{equation}
is correlation between $A_{ij}$ and $B_{ij}$ ($i,j\in\left\{ 1,2\right\} $).
For geometric reasons obvious from Figure$\:$\ref{fig:Schematic-representation-of},
it is referred to as the \emph{cosphericity inequality} {[}\emph{\ref{enu:Kujala,-J.-V.,2008}}{]}.

In accordance with CbD, the spin pairs recorded under mutually exclusive
settings, say $\left(A_{11},B_{11}\right)$ and $\left(A_{12},B_{12}\right)$,
should generally be treated as stochastically unrelated\emph{,} possessing
no joint distribution. In accordance with APpC, one can consider all
possible eight-component random vectors
\begin{equation}
H=\left(A_{11},B_{11},A_{12},B_{12},A_{21},B_{21},A_{22},B_{22}\right)\label{eq:coupling vector}
\end{equation}
(see Figure$\:$\ref{fig:A-representation-of}) whose empirically
observable two-component parts $\left(A_{ij},B_{ij}\right)$ have
the distributions shown in the matrices (\ref{eq:observable probs}).%
\footnote{A rigorous formulation {[}\emph{\ref{enu:E.N.-Dzhafarov,-J.V.PLOS},\ref{enu:E.N.-Dzhafarov,-in press 2}-\ref{enu:E.N.-Dzhafarov,-&in press}}{]}
requires that $H$ be defined as $\left(A_{ij}',B_{ij}':i,j\in\left\{ 1,2\right\} \right)$
such that each pair $\left(A_{ij}',B_{ij}'\right)$ has the same distribution
as (rather than is identical to) $\left(A_{ij},B_{ij}\right)$ for
$i,j\in\left\{ 1,2\right\} $. Our lax notation is unlikely to cause
confusion in the present paper.\label{fn:A-rigorous-formulation}%
} A subset of any $k\leq8$ components of $H$ is referred to as its
$k$\emph{-marginal}, and its distribution is referred to as a $k$\emph{-marginal}
\emph{distribution} (the number $k$ being omitted if clear from the
context). So far we considered 1-marginals, that we posited to have
equiprobable $+1/-1$ values, and certain 2-marginals. Of the latter,
the marginal distributions of $\left(A_{ij},B_{ij}\right)$ are the
mandatory constraints imposed on $H$, in fact underlying the definition
of $H$. 

We know that, in this conceptual framework, NC is equivalent to the
choice of the p-coupling scheme in which (\ref{eq:0-connections})
is satisfied. Under the assumption that all 1-marginal probabilities
are $\nicefrac{1}{2}$, this means that we have the following joint
probabilities for the four (empirically unobservable) 2-marginals
$\left(A_{i1},A_{i2}\right)$ and $\left(B_{1j},B_{2j}\right)$:

\begin{equation}
\begin{array}{c}
\begin{array}{|c|c|c|}
\hline  & A_{i2}=+1 & A_{i2}=-1\\
\hline A_{i1}=+1 & \nicefrac{1}{2} & 0\\
\hline A_{i1}=-1 & 0 & \nicefrac{1}{2}
\\\hline \end{array}\\
\\
\begin{array}{|c|c|c|}
\hline  & \textnormal{\ensuremath{B_{2j}}=+1} & B_{2j}=-1\\
\hline B_{1j}=+1 & \nicefrac{1}{2} & 0\\
\hline B_{1j}=-1 & 0 & \nicefrac{1}{2}
\\\hline \end{array}
\end{array}\label{eq:classical connection}
\end{equation}
We know that these joint probabilities have neither empirical nor
theoretical justification, because de facto $A_{i1}$ and $A_{i2}$
(or $B_{1j}$ and $B_{2j}$) are not jointly distributed. There is
therefore no prohibition against p-coupling them differently, so that
generally, 
\begin{equation}
\begin{array}{c}
\begin{array}{|c|c|c|}
\hline \mbox{} & A_{i2}=+1 & A_{i2}=-1\\
\hline A_{i1}=+1 & \nicefrac{1}{2}-\varepsilon_{i}^{1} & \varepsilon_{i}^{1}\\
\hline A_{i1}=-1 & \varepsilon_{i}^{1} & \nicefrac{1}{2}-\varepsilon_{i}^{1}
\\\hline \end{array}\\
\\
\begin{array}{|c|c|c|}
\hline  & B_{2j}=+1 & B_{2j}=-1\\
\hline B_{1j}=+1 & \nicefrac{1}{2}-\varepsilon{}_{j}^{2} & \varepsilon{}_{j}^{2}\\
\hline B_{1j}=-1 & \varepsilon{}_{j}^{2} & \nicefrac{1}{2}-\varepsilon{}_{j}^{2}
\\\hline \end{array}
\end{array}\label{eq:general connection}
\end{equation}
In other words, a general p-coupling $H$ allows us to replace (\ref{eq:0-connections})
with
\[
\Pr\left[A_{i1}\not=A_{i2}\right]=2\varepsilon_{i}^{1},\;\Pr\left[B_{1j}\not=B_{2j}\right]=2\varepsilon{}_{j}^{2},
\]
where $0\leq\varepsilon_{i}^{1},\varepsilon{}_{j}^{2}\leq\nicefrac{1}{2}$,
$i,j\in\left\{ 1,2\right\} $. 

We use the term \emph{connection} to refer to the 2-marginals $\left(A_{i1},A_{i2}\right)$
and $\left(B_{1j},B_{2j}\right)$\emph{. }The four connections are
uniquely characterized by the \emph{connection vector} 
\begin{equation}
\varepsilon=\left(\varepsilon_{1}^{1},\varepsilon_{2}^{1},\varepsilon_{1}^{2},\varepsilon_{2}^{2}\right),
\end{equation}
or the corresponding correlations 
\begin{equation}
r=\left(r_{1}^{1},r_{2}^{1},r_{1}^{2},r_{2}^{2}\right),
\end{equation}
where $r_{l}^{k}=1-4\varepsilon_{l}^{k}$, $k,l\in\left\{ 1,2\right\} $.

An outcome vector $p$ and a connection vector $\varepsilon$ are
\emph{mutually compatible} {[}\emph{\ref{enu:E.N.-Dzhafarov,-J.V.PLOS}}{]}
if they can be embedded in one and the same p-coupling $H$, as shown
in Figure$\:$\ref{fig:A-representation-of}. In other words, $\varepsilon$
and $p$ are mutually compatible if they can be computed as marginal
probabilities from the probabilities assigned to the $2^{8}$ values
of $H$. That is, $p_{ij}$ is the sum of the probabilities for all
values of $H$ with $A_{ij}=B_{ij}=1$, and $\varepsilon_{i}^{1},\varepsilon{}_{j}^{2}$
are the sums of the probabilities for all values of $H$ with, respectively,
$A_{i1}=-A_{i2}=1$ and $B_{1j}=-B_{2j}=1$. The set of all compatible
pairs $\left(p,\varepsilon\right)$ forms an 8-dimensional polytope
described by Lemma \ref{lem:p-and-}. But we need some notation first.

Given a connection vector $\varepsilon=\left(\varepsilon_{1}^{1},\varepsilon_{2}^{1},\varepsilon_{1}^{2},\varepsilon_{2}^{2}\right)$,
consider the sums 
\begin{equation}
\frac{1}{4}\left(\pm r_{1}^{1}\pm r_{2}^{1}\pm r_{1}^{2}\pm r_{2}^{2}\right)
\end{equation}
where each $\pm$ is replaced with either $+$ or $-$. Let $s_{0}\left(\varepsilon\right)$
denote the largest of the eight such sums with even numbers of plus
signs, 
\begin{equation}
s_{0}\left(\varepsilon\right)=\max\left\{ \frac{1}{4}\left(\pm r_{1}^{1}\pm r_{2}^{1}\pm r_{1}^{2}\pm r_{2}^{2}\right):\textnormal{ the number of }+\textnormal{'s is 0, 2, or 4}\right\} .\label{eq:s0}
\end{equation}
Let 
\begin{equation}
s_{1}\left(\varepsilon\right)=\max\left\{ \frac{1}{4}\left(\pm r_{1}^{1}\pm r_{2}^{1}\pm r_{1}^{2}\pm r_{2}^{2}\right):\textnormal{ the number of }+\textnormal{'s is 1 or 3}\right\} .\label{eq:s1}
\end{equation}
Since the components of $\varepsilon$ belong to $\left[0,\nicefrac{1}{2}\right]$,
the pairs $\left(s_{0}\left(\varepsilon\right),s_{1}\left(\varepsilon\right)\right)$
fill in the triangular area connecting $\left(0,0\right)$, $\left(\nicefrac{1}{2},1\right)$,
and $\left(1,\nicefrac{1}{2}\right)$. In particular,
\begin{equation}
\begin{array}{c}
s_{0}\left(\varepsilon\right)+s_{1}\left(\varepsilon\right)\leq\nicefrac{3}{2},\\
0\leq s_{0}\left(\varepsilon\right)\leq1,\\
0\leq s_{1}\left(\varepsilon\right)\leq1.
\end{array}\label{eq:properties}
\end{equation}

We define $s_{0}\left(p\right)$ and $s_{1}\left(p\right)$ for any
outcome vector $p=\left(p_{11},p_{12},p_{21},p_{22}\right)$ analogously
(using $r_{ij}=4p_{ij}-1$ in place of $r_{j}^{i}$, see Figure \ref{fig:A-representation-of}):
\begin{equation}
\begin{array}{c}
s_{0}\left(p\right)=\max\left\{ \frac{1}{4}\left(\pm r_{11}\pm r_{12}\pm r_{21}\pm r_{22}\right):\textnormal{ the number of }+\textnormal{'s is 0, 2, or 4}\right\} ,\\
\\
s_{1}\left(p\right)=\max\left\{ \frac{1}{4}\left(\pm r_{11}\pm r_{12}\pm r_{21}\pm r_{22}\right):\textnormal{ the number of }+\textnormal{'s is 1 or 3}\right\} .
\end{array}
\end{equation}
 Since the components of $p$ also belong to $\left[0,\nicefrac{1}{2}\right]$,
the pairs $s_{0}\left(p\right),$ $s_{1}\left(p\right)$ have precisely
the same properties as $s_{0}\left(\varepsilon\right),s_{1}\left(\varepsilon\right)$.

\begin{figure}
\[
\xymatrix{ & B_{12}\ar[rr]_{r_{2}^{2}=1-4\varepsilon_{2}^{2}}\ar[dl] &  & B_{22}\ar[ll]\ar[dr]\\
A_{12}\ar[d]\ar[ur]^{r_{12}=4p_{12}-1} &  &  &  & A_{22}\ar[d]_{r_{2}^{1}=1-4\varepsilon_{2}^{1}}\ar[ul]_{r_{22}=4p_{22}-1}\\
A_{11}\ar[u]_{r_{1}^{1}=1-4\varepsilon_{1}^{1}}\ar[dr]_{r_{11}=4p_{11}-1} &  &  &  & A_{21}\ar[u]\ar[dl]^{r_{21}=4p_{21}-1}\\
 & B_{11}\ar[rr]^{r_{1}^{1}=1-4\varepsilon_{1}^{2}}\ar[lu] &  & B_{21}\ar[ll]\ar[ur]
}
\]

\caption[.]{A p-coupling $H$ for pairs $\left(A_{ij},B_{ij}\right)$ (see footnote
\ref{fn:A-rigorous-formulation}). The number at a double-arrow connecting
two random variables is their correlation. Horizontal and vertical
arrows correspond to a connection vector $\varepsilon=\left(\varepsilon_{1}^{1},\varepsilon_{2}^{1},\varepsilon_{1}^{2},\varepsilon_{2}^{2}\right)$;
diagonal arrows correspond to an outcome vector $p=\left(p_{11},p_{12},p_{21},p_{22}\right)$.\label{fig:A-representation-of} }
\end{figure}
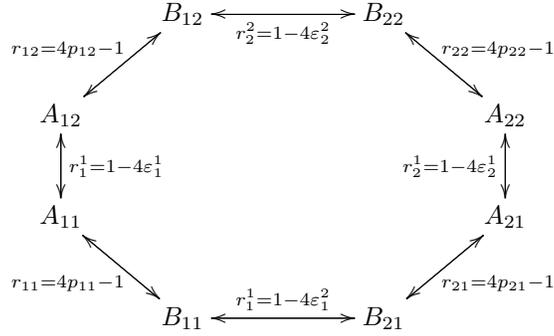

\begin{lem}
\label{lem:p-and-}p and $\varepsilon$ are mutually compatible if
and only if 
\begin{equation}
\begin{array}{c}
s_{0}\left(\varepsilon\right)+s_{1}\left(p\right)\le\nicefrac{3}{2},\\
s_{1}\left(\varepsilon\right)+s_{0}\left(p\right)\le\nicefrac{3}{2}.
\end{array}
\end{equation}

\end{lem}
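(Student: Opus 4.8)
\emph{Reformulation.} The engine of the proof is to read Figure~\ref{fig:A-representation-of} literally: since all eight $1$-marginals of $H$ are forced to be uniform by (\ref{eq:observable probs}), prescribing the distribution of a pair of the octagon's variables is the same as prescribing one correlation, and the eight pairs whose $2$-marginals are constrained --- the four observable pairs $\left(A_{ij},B_{ij}\right)$ and the four connections $\left(A_{i1},A_{i2}\right),\left(B_{1j},B_{2j}\right)$ --- are exactly the eight edges of the octagon, alternating between ``observable'' edges (carrying the $r_{ij}$) and ``connection'' edges (carrying the $r^{k}_{l}$). Thus $p$ and $\varepsilon$ are mutually compatible if and only if the eight $\pm1$ random variables at the vertices admit a joint distribution with uniform margins whose correlation on each edge $e=(u,v)$ equals the prescribed value $c_{e}\in\{r_{ij},r^{k}_{l}\}$. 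I would prove both directions of the Lemma against this reformulation.

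\emph{Necessity.} Suppose such a joint distribution exists, and let $\lambda=(\lambda_{e})_{e}\in\{\pm1\}^{8}$ attach a sign to each edge, with an \emph{odd} number of $-1$'s. Put $Z=\sum_{e=(u,v)}\lambda_{e}X_{u}X_{v}$. Because each vertex lies on exactly two edges, $\prod_{e=(u,v)}X_{u}X_{v}=1$, so the event $Z=8$ (all eight summands $+1$) would force $\prod_{e}\lambda_{e}=1$, contradicting the odd count; $Z=-8$ is ruled out the same way. Hence $|Z|\le6$ pointwise, and averaging gives $\bigl|\sum_{e}\lambda_{e}c_{e}\bigr|\le6$. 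Splitting this into the four connection terms plus the four observable terms, and noting that $\prod_{e}\lambda_{e}=-1$ makes exactly one of the two partial sign-products equal to $-1$, the connection half is at most $4s_{0}(\varepsilon)$ or $4s_{1}(\varepsilon)$ according to the parity of its signs, and likewise the observable half with $s_{0}(p),s_{1}(p)$. Maximizing the two halves independently yields precisely $s_{0}(\varepsilon)+s_{1}(p)\le\nicefrac{3}{2}$ and $s_{1}(\varepsilon)+s_{0}(p)\le\nicefrac{3}{2}$.

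\emph{Sufficiency.} Assume the two displayed inequalities. Running the previous bookkeeping backwards, they say exactly that the correlation vector $c=(c_{e_{1}},\dots,c_{e_{8}})$, with the edges listed cyclically, satisfies $\sum_{k}\sigma_{k}c_{k}\le6$ for every $\sigma\in\{\pm1\}^{8}$ with $\prod_{k}\sigma_{k}=-1$; and $c\in[-1,1]^{8}$ holds automatically since $p_{ij}\in[0,\nicefrac{1}{2}]$ and $\varepsilon^{k}_{l}\in[0,\nicefrac{1}{2}]$. The crux is then the polytope identity
\[
\Bigl\{x\in[-1,1]^{8}:\ \sum_{k}\sigma_{k}x_{k}\le6\ \text{ whenever }\ \sigma\in\{\pm1\}^{8},\ \prod_{k}\sigma_{k}=-1\Bigr\}=\mathrm{conv}\,E_{+},
\]
where $E_{+}=\{y\in\{\pm1\}^{8}:\prod_{k}y_{k}=1\}$. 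The inclusion $\supseteq$ is immediate: for $y\in E_{+}$ and such a $\sigma$, $\sum_{k}\sigma_{k}y_{k}=8-2h$ where $h$ is the number of coordinates on which $y$ and $\sigma$ disagree, and $h$ is odd because $y$ and $\sigma$ have opposite weight parity, so the value is at most $6$. For $\subseteq$ I would show the left-hand polytope is $\{\pm1\}$-integral: a vertex with some coordinate in the open interval $(-1,1)$ can be exhibited as the midpoint (or average) of two of its other points by inspecting which odd cuts can be tight there, contradicting extremality; and an odd-weight vertex of $\{\pm1\}^{8}$ violates the cut $\sigma=x$, so every vertex lies in $E_{+}$. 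Granting this, $c\in\mathrm{conv}\,E_{+}$, so some distribution $\nu$ on $E_{+}$ has mean $c$. Taking $(Y_{1},\dots,Y_{8})\sim\nu$, an independent fair $\pm1$ variable $\xi$, and setting $X_{v_{1}}=\xi$ and $X_{v_{k}}=\xi\prod_{j<k}Y_{j}$ produces a joint distribution of the octagon variables with uniform margins and edge correlations $\mathrm{E}[X_{v_{k}}X_{v_{k+1}}]=\mathrm{E}[Y_{k}]=c_{k}$ (the wrap-around edge using $\prod_{k}Y_{k}=1$). Relabelling these eight variables as $A_{11},B_{11},\dots,B_{22}$ gives a p-coupling $H$ witnessing compatibility of $p$ and $\varepsilon$.

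\emph{Main obstacle.} Necessity and the ``lift'' in sufficiency are routine; the substantive step is the facet description of the even $8$-cube --- proving that the odd-parity cuts together with the box constraints are not merely valid for $\mathrm{conv}\,E_{+}$ but \emph{complete}. I expect to settle this by the integrality (vertex) argument sketched above, or alternatively by an explicit Farkas/LP-duality computation on the projection of the $2^{8}$-simplex. A secondary point needing care is the sign-and-parity bookkeeping that compresses the $2^{7}$ odd cuts into the two compact inequalities in $s_{0},s_{1}$.
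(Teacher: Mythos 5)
Your reformulation of compatibility is exactly right: because the prescribed $2$-marginals (\ref{eq:observable probs}) and (\ref{eq:general connection}) are symmetric, they force uniform $1$-marginals and are each determined by a single correlation, so Lemma~\ref{lem:p-and-} is precisely the realizability question for eight prescribed edge correlations on the $8$-cycle of Figure~\ref{fig:A-representation-of}, with observable and connection edges alternating. The paper itself offers no proof, deferring to reference [\emph{\ref{enu:E.N.-Dzhafarov,-J.V.PLOS}}], where the same polytope is analyzed by a direct linear-programming/facet computation; your cycle-and-parity-polytope framing is an equivalent but cleaner way to organize that computation. Your necessity argument is complete and correct: the pointwise bound $|Z|\le 6$ follows from $\prod_{e}X_{u}X_{v}=1$, and the parity bookkeeping (the odd cut splits into an even-parity half and an odd-parity half, matching the definitions (\ref{eq:s0})--(\ref{eq:s1}) of $s_{0}$ and $s_{1}$) correctly compresses the $2^{7}$ odd cuts into the two displayed inequalities, with the two halves maximized independently because any even/odd pairing of half-sign-vectors assembles into an admissible odd $\lambda$. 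The lift in the sufficiency direction is also correct, including the wrap-around edge via $\prod_{k}Y_{k}=1$ and the uniformization of the $1$-marginals by the independent global sign $\xi$.

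The one substantive step you have not actually proved is the inclusion $\subseteq$ in your polytope identity, i.e., that the box constraints together with the odd cuts $\sum_{k}\sigma_{k}x_{k}\le 6$ give a \emph{complete} facet description of $\mathrm{conv}\,E_{+}$. This is a true, classical fact (it is Jeroslow's characterization of the parity polytope, equivalently the cut polytope of a cycle), so your proof strategy does go through; but the midpoint argument as sketched needs more care than you indicate. A vertex is by definition not a midpoint of two other points of the polytope, so the contradiction must come from explicitly exhibiting a feasible perturbation direction $\pm t\,d$ at any point with a coordinate in $(-1,1)$; since a tight odd cut $\sigma$ and the cut obtained by flipping one sign of $\sigma$ have even-parity difference, one must track which cuts can be simultaneously tight, and this is where the actual work lies. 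Either completing that vertex analysis or running the Farkas/LP-duality computation you mention would close the argument; as written, the sufficiency direction rests on an asserted, not proved, lemma.
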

\noindent For the proof of this lemma see {[}\emph{\ref{enu:E.N.-Dzhafarov,-J.V.PLOS}}{]}.

The following observation that we need later on is proved by simple
algebra.
\begin{lem}
\label{lem:E0}The set $E_{0}$ of connection vectors $\varepsilon$
with $s_{0}\left(\varepsilon\right)=1$ consists of the null vector
$\varepsilon_{0}=\left(0,0,0,0\right)$ and seven vectors obtained
by replacing any two of or all four zeros in $\varepsilon_{0}$ with
$\nicefrac{1}{2}$. For all $\varepsilon$ in $E_{0}$, $s_{1}\left(\varepsilon\right)=\nicefrac{1}{2}$.
\end{lem}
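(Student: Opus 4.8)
The plan is to reduce the claim about $E_0$ to a statement about the extreme points of a small polytope and then verify it by a direct case analysis. Recall that $s_0(\varepsilon)$ is the maximum over the eight sums $\tfrac14(\pm r_1^1\pm r_2^1\pm r_1^2\pm r_2^2)$ with an even number of plus signs, where $r_l^k=1-4\varepsilon_l^k\in[-1,1]$ since $\varepsilon_l^k\in[0,\tfrac12]$. Thus $s_0(\varepsilon)=1$ forces at least one of these even-sign sums to equal $1$, i.e.\ some choice of signs $\sigma_1,\sigma_2,\sigma_3,\sigma_4\in\{+1,-1\}$ with an even number of $-1$'s (equivalently an even number of $+1$'s, up to the obvious reindexing that the lemma's phrasing uses) satisfies $\sigma_1 r_1^1+\sigma_2 r_2^1+\sigma_3 r_1^2+\sigma_4 r_2^2=4$. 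Since each $|\sigma_m r|\le 1$, this can only happen when every term attains its extreme value $+1$, i.e.\ $\sigma_m r_{(m)}=1$ for each $m$. Hence each $r_l^k=\pm1$, so each $\varepsilon_l^k\in\{0,\tfrac12\}$, and moreover the pattern of which $\varepsilon_l^k$ equal $\tfrac12$ (equivalently, which $r_l^k$ equal $-1$) must be compatible with an even-sign combination.

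Next I would pin down exactly which sign patterns are allowed. Writing $\varepsilon_l^k=\tfrac12$ iff $r_l^k=-1$, the requirement ``$\sigma_m r_{(m)}=1$ for all $m$ for some even-parity $(\sigma_1,\dots,\sigma_4)$'' translates into: the set of coordinates where $\varepsilon=\tfrac12$ must be precisely the set of coordinates where $\sigma_m=-1$, and that set has even cardinality. So the admissible $\varepsilon$ with $s_0(\varepsilon)=1$ are exactly those with an even number (that is, $0$, $2$, or $4$) of coordinates equal to $\tfrac12$ and the rest equal to $0$. Counting: $\binom40+\binom42+\binom44=1+6+1=8$ vectors, namely $\varepsilon_0=(0,0,0,0)$, the six vectors with exactly two entries replaced by $\tfrac12$, and the all-$\tfrac12$ vector $(\tfrac12,\tfrac12,\tfrac12,\tfrac12)$. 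Conversely, for any such $\varepsilon$ one exhibits the matching even-parity sign vector and checks the chosen sum equals $1$, while all eight even-sign sums are clearly $\le1$ by the triangle inequality; hence $s_0(\varepsilon)=1$ indeed. This establishes the first assertion of the lemma.

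For the second assertion, I would compute $s_1(\varepsilon)$ for each of the eight vectors in $E_0$. For $\varepsilon_0$ we have $(r_1^1,r_2^1,r_1^2,r_2^2)=(1,1,1,1)$, so every odd-sign sum is $\tfrac14(\text{three }(+1)\text{'s and one }(-1)) = \tfrac14(1+1+1-1)=\tfrac12$ (and the all-but-one flipped versions give the same), hence $s_1(\varepsilon_0)=\tfrac12$. For the all-$\tfrac12$ vector, $(r_1^1,\dots)=(-1,-1,-1,-1)$ and again every odd-parity sum has three $(-1)$'s contributing $+1$ each and one $(-1)$ contributing $-1$ after the sign flip, giving $\tfrac14(1+1+1-1)=\tfrac12$. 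For a two-$\tfrac12$ vector, two of the $r$'s are $-1$ and two are $+1$; choosing the odd-parity sign vector that flips exactly the two $-1$ entries and one of the $+1$ entries (that is one or three minus signs, odd parity) yields $\tfrac14(1+1+1-1)=\tfrac12$, and no odd-sign sum can exceed $\tfrac12$ since a sum of four $\pm1$'s with an odd number of minus signs among the sign choices cannot make all four terms $+1$ (that would require zero mismatches, an even count). Collecting these, $s_1(\varepsilon)=\tfrac12$ throughout $E_0$, as claimed.

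The only mildly delicate point — the ``main obstacle,'' such as it is — is the bookkeeping around parity: one must be careful that ``even number of plus signs'' in the definition of $s_0$ and ``the coordinates where $\varepsilon=\tfrac12$'' line up correctly under the substitution $r=1-4\varepsilon$, and that the optimizing sign vector for $s_1$ always has the right (odd) parity. This is handled cleanly by noting that flipping the sign of $\sigma_m$ toggles parity, so from any even-parity optimizer for $s_0$ one reaches an odd-parity configuration attaining $\tfrac12$ by a single additional flip; everything else is the observation that a sum of four numbers in $[-1,1]$ equals $4$ only when each equals $1$. No heavier machinery is needed, and the computation for each of the eight vectors is immediate once the parity accounting is set up.
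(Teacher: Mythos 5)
Your proof is correct and is precisely the ``simple algebra'' the paper invokes without writing out: the two key observations --- that a sum of four numbers in $[-1,1]$ can equal $4$ only when every term equals $1$, and that the parity of the mismatch count between an odd-parity sign vector and a $\pm1$-vector $r$ with an even number of $-1$'s forces at least one mismatch --- pin down $E_{0}$ and give $s_{1}\left(\varepsilon\right)=\nicefrac{1}{2}$ exactly as claimed. One cosmetic slip: for $\varepsilon_{0}$ not every odd-parity sum equals $\nicefrac{1}{2}$ (those with a single plus sign give $-\nicefrac{1}{2}$); only the maximum is $\nicefrac{1}{2}$, which is all the lemma needs.
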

The null (or identity) connection vector $\varepsilon_{0}$ plays
a special role, as it corresponds to (\ref{eq:0-connections}) and
(\ref{eq:classical connection}): this is the choice implicitly made
in all Bell-type theorems. It also plays a central role in the no-forcing
theorem below. Note that according to Lemmas \ref{lem:p-and-} and
\ref{lem:E0}, an outcome vector $p$ is compatible with $\varepsilon_{0}$
if and only if it is compatible with all connection vectors in $E_{0}$. 

It is easy to see that a connection vector can be chosen so that it
is compatible with all QM-compliant outcome vectors. The simplest
example is the connection vector $\varepsilon_{ind}=\left(\nicefrac{1}{4},\nicefrac{1}{4},\nicefrac{1}{4},\nicefrac{1}{4}\right)$
corresponding to the p-couplings (\ref{eq:coupling vector}) with
all components pairwise independent, except, possibly, for pairs $\left(A_{ij},B_{ij}\right)$.
Since $s_{0}\left(\varepsilon_{ind}\right)=s_{1}\left(\varepsilon_{ind}\right)=0$,
it follows from Lemma \ref{lem:p-and-} that $\varepsilon_{ind}$
is compatible with any $p$, whether QM-compliant or not. 

What is less obvious is the answer to the question: what are all the
connection vectors that are compatible \emph{only} with QM-compliant
outcome vectors? In other words, we are interested in the set $\Force_{QM}$
of connection vectors, defined as follows:\\

\hangindent=\parindent\hangafter=1\noindent$\Force_{QM}$: all connection
vectors $\varepsilon$ such that if $p$ is compatible with $\varepsilon$,
then $p$ is QM-compliant, i.e., satisfies the cosphericity inequality
(\ref{eq:cosphericity}).\\

\noindent The name of the set is to indicate that $\varepsilon\in\Force_{QM}$
``forces'' every $p$ compatible with it to be QM-compliant. The
set is not empty, because, as the next lemma shows, it includes $E_{0}$
of Lemma \ref{lem:E0}.
\begin{lem}
\label{lem:Part of Force}$E_{0}\subset\Force_{QM}$.\end{lem}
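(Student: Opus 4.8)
The plan is to strip off Lemmas~\ref{lem:E0} and~\ref{lem:p-and-} and reduce the assertion to the single fact that every correlation vector obeying the CHSH inequalities~(\ref{eq:Bell/CH}) is QM-compliant.

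Fix an arbitrary $\varepsilon\in E_{0}$; by Lemma~\ref{lem:E0} we have $s_{0}\left(\varepsilon\right)=1$ and $s_{1}\left(\varepsilon\right)=\nicefrac{1}{2}$. Let $p$ be any outcome vector compatible with $\varepsilon$. Lemma~\ref{lem:p-and-} then yields $s_{0}\left(\varepsilon\right)+s_{1}\left(p\right)\le\nicefrac{3}{2}$ and $s_{1}\left(\varepsilon\right)+s_{0}\left(p\right)\le\nicefrac{3}{2}$, that is, $s_{1}\left(p\right)\le\nicefrac{1}{2}$ and $s_{0}\left(p\right)\le1$; the second is vacuous, since $s_{0}\left(p\right)\le1$ holds for every outcome vector by~(\ref{eq:properties}). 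Hence the entire content of ``$p$ is compatible with $\varepsilon$'' is the one inequality $s_{1}\left(p\right)\le\nicefrac{1}{2}$. Writing out the four odd-sign sums that define $s_{1}\left(p\right)$ and substituting $r_{ij}=4p_{ij}-1$, one checks that $s_{1}\left(p\right)\le\nicefrac{1}{2}$ is exactly the conjunction of the CHSH inequalities~(\ref{eq:Bell/CH}) (this is also the content of the equivalences~(\ref{eq:equivalences}), since $\varepsilon_{0}$ is the non-contextual p-coupling and, by Lemmas~\ref{lem:p-and-}--\ref{lem:E0}, $p$ is compatible with $\varepsilon$ iff it is compatible with $\varepsilon_{0}$).

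The one substantive step, and the one I expect to be the main obstacle, is to deduce the cosphericity inequality~(\ref{eq:cosphericity}) from~(\ref{eq:Bell/CH}), i.e., to show that the CHSH polytope lies inside the QM-compliant region. I would argue this by convexity. The eight vertices of the polytope cut out by~(\ref{eq:Bell/CH}) are the $\pm1$-vectors $\left(a_{1}b_{1},a_{1}b_{2},a_{2}b_{1},a_{2}b_{2}\right)$ with $a_{i},b_{j}\in\{-1,1\}$ (equivalently, those $\pm1$-vectors whose four entries multiply to $+1$), and each of them is QM-compliant: for a fixed unit vector $e$ put $\alpha_{i}=a_{i}e$ and $\beta_{j}=-b_{j}e$, so that $\langle\alpha_{i}|\beta_{j}\rangle=-a_{i}b_{j}$ and, by~(\ref{eq:cosine law}) and~(\ref{eq:correlation}), $r_{ij}=a_{i}b_{j}$. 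On the other hand, by the cited characterization the QM-compliant region is precisely the set of vectors of the form $r_{ij}=u_{i}\cdot v_{j}$ with unit vectors $u_{1},u_{2},v_{1},v_{2}$, and this set is convex: given two representations $r'_{ij}=u'_{i}\cdot v'_{j}$, $r''_{ij}=u''_{i}\cdot v''_{j}$ and $\lambda\in\left[0,1\right]$, the stacked vectors $\tilde u_{i}=\left(\sqrt{\lambda}\,u'_{i},\sqrt{1-\lambda}\,u''_{i}\right)$ and $\tilde v_{j}=\left(\sqrt{\lambda}\,v'_{j},\sqrt{1-\lambda}\,v''_{j}\right)$ are again unit vectors with $\tilde u_{i}\cdot\tilde v_{j}=\lambda r'_{ij}+\left(1-\lambda\right)r''_{ij}$. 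Therefore the CHSH polytope, being the convex hull of its vertices, lies inside the cosphericity region. (Alternatively, one may check that~(\ref{eq:Bell/CH}) implies~(\ref{eq:cosphericity}) directly, by substituting $r_{ij}=\cos\theta_{ij}$ with $\theta_{ij}\in\left[0,\pi\right]$, which recasts~(\ref{eq:cosphericity}) as $\cos\left(\theta_{11}+\theta_{12}\right)\le\cos\left(\theta_{21}-\theta_{22}\right)$ together with the inequality obtained by interchanging the two index pairs.)

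Combining the three steps: any $p$ compatible with any $\varepsilon\in E_{0}$ satisfies $s_{1}\left(p\right)\le\nicefrac{1}{2}$, hence~(\ref{eq:Bell/CH}), hence~(\ref{eq:cosphericity}), and so is QM-compliant; thus $\varepsilon\in\Force_{QM}$, and since $\varepsilon\in E_{0}$ was arbitrary, $E_{0}\subset\Force_{QM}$. The only ingredient that is not bookkeeping with Lemmas~\ref{lem:p-and-} and~\ref{lem:E0} is the CHSH-to-cosphericity inclusion in the middle paragraph.
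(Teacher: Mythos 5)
Your reduction to the single inequality $s_{1}\left(p\right)\leq\nicefrac{1}{2}$ via Lemmas \ref{lem:p-and-} and \ref{lem:E0}, and the identification of that inequality with the CHSH system (\ref{eq:Bell/CH}), are both correct and match the spirit of the paper's argument, which invokes Fine's theorem for the same equivalence. Where you genuinely diverge is the final step: the paper simply cites {[}\emph{\ref{enu:Kujala,-J.-V.,2008}}{]} for the fact that the cosphericity inequality is a necessary condition for compatibility with $\varepsilon_{0}$, whereas you prove the inclusion ``CHSH polytope $\subseteq$ cosphericity region'' from scratch by exhibiting the eight deterministic vertices $r_{ij}=a_{i}b_{j}$ as QM-compliant and then arguing that the QM-compliant set is convex. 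That is a more self-contained and arguably more illuminating route, at the price of importing two standard but unproven facts: that the deterministic points are indeed \emph{all} the vertices of the polytope cut out by (\ref{eq:Bell/CH}) on $\left[0,\nicefrac{1}{2}\right]^{4}$ (this is again essentially Fine's theorem, so it is consistent with what the paper already assumes).

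The one step you should tighten is the convexity argument. QM-compliance is defined in the paper via axes in physical $3$-space, and the cited characterization is the cosphericity inequality for that notion; your stacking construction $\tilde{u}_{i}=\left(\sqrt{\lambda}\,u'_{i},\sqrt{1-\lambda}\,u''_{i}\right)$ produces unit vectors in a higher-dimensional space, so it proves convexity of the set $\left\{ r_{ij}=u_{i}\cdot v_{j}:\textnormal{unit vectors in some }\mathbb{R}^{N}\right\} $, not directly of the QM-compliant set as defined. Closing this requires the (true, standard, Tsirelson/Landau-type) fact that for the $2\times2$ case realizability in arbitrary dimension is equivalent to realizability in $\mathbb{R}^{3}$, i.e., to the cosphericity inequality \textemdash{} but that fact is not among the results the paper states, so as written your proof silently substitutes one characterization of QM-compliance for another. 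Either cite that equivalence explicitly, or fall back on your parenthetical alternative: a direct verification that (\ref{eq:Bell/CH}) together with $\left|r_{ij}\right|\leq1$ implies (\ref{eq:cosphericity}) via $r_{ij}=\cos\theta_{ij}$, which stays entirely within the paper's own framework.
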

\begin{proof}
By Fine's theorem {[}\emph{\ref{enu:Fine,-A.-(1982b).}}{]},%
\footnote{\textcolor{black}{The theorem states that the single-indexed $A_{1},A_{2},B_{1},B_{2}$
are jointly distributed if and only if $p$ satisfies the CHSH inequalities.
We use the fact that the single-indexation means that the connection
vector for the double-indexed $A$'s and $B$'s is $\varepsilon_{0}$,
and that the existence of the joint distribution of these $A$'s and
$B$'s means, by definition, that $\varepsilon_{0}$ and $p$ are
compatible.}%
} $p$ is compatible with $\varepsilon_{0}$ (hence, by Lemmas \ref{lem:p-and-}
and \ref{lem:E0}, also with other members of $E_{0}$) if and only
if it satisfies the CHSH inequalities (\ref{eq:Bell/CH}). Since the
cosphericity inequality is a necessary condition for the compatibility
of $p$ with $\varepsilon_{0}$ {[}\emph{\ref{enu:Kujala,-J.-V.,2008}}{]},
QM-compliance follows from the CHSH inequalities. 
\end{proof}
We are thus led to the following questions:\\ 

\hangindent=\parindent\hangafter=1\noindent(Q1) What is the entire
set $\Force_{QM}$ (what connection vectors it contains beside $E_{0}$)?\\ 

\hangindent=\parindent\hangafter=1\noindent(Q2) What is the set $P_{QM}$
of the outcome vectors $p$ each of which is compatible with at least
one of the connection vectors in $\Force_{QM}$?\\ 

\noindent The questions are significant for the following reason.
If $P_{QM}$ turned out to coincide with the set of all QM-compliant
$p$, we would have a hope of constructing a KPT model that would
match QM in the sense of allowing all those $p$ that are possible
in QM and forbidding all those $p$ that QM forbids. Quantum determinism
could then be ``explained'' by pointing out that the multiple probability
spaces corresponding to different measurement settings can be p-coupled
by using appropriately chosen connection vectors for different settings.
However, this hope should be abandoned, because $\Force_{QM}$ in
fact coincides with $E_{0}$, whence $P_{QM}$ includes only those
$p$ that satisfy the CHSH inequalities. It is well known that the
CHSH inequalities do not describe all QM-compliant vectors: e.g.,
they are violated if we use in (\ref{eq:cosine law}) coplanar vectors
at the angles $\alpha_{1}=0$, $\alpha_{2}=\nicefrac{\pi}{2}$, $\beta_{1}=\nicefrac{\pi}{4}$,
$\beta_{2}=-\nicefrac{\pi}{4}$.

The proof makes use of the following observation.
\begin{lem}
\label{lem:If-,-then}If $p$ belong to the set $P_{0}$ described
by
\[
\begin{array}{l}
s_{0}\left(p\right)+s_{1}\left(p\right)=\nicefrac{3}{2},\\
s_{0}\left(p\right)<1,
\end{array}
\]
then $p$ is not QM-compliant (violates the cosphericity inequality\textup{).}\end{lem}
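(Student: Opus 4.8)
The plan is to exploit the symmetries of the problem to put $p$ into a normal form, which makes $P_{0}$ completely explicit, and then to read off the violation of the cosphericity inequality (\ref{eq:cosphericity}) by direct substitution.

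First I would record that QM-compliance, and also each of $s_{0}$ and $s_{1}$, is invariant under negating a whole row or a whole column of the $2\times2$ array $\left(r_{ij}\right)$: for QM-compliance this is merely the replacement $\alpha_{i}\to-\alpha_{i}$ or $\beta_{j}\to-\beta_{j}$ (and (\ref{eq:cosphericity}) is literally unchanged by it), while for $s_{0}$ and $s_{1}$ it holds because such a negation flips an even number of the $r_{ij}$ and hence permutes among themselves the sign patterns with an even (resp.\ odd) number of plus signs. The four even sign patterns, up to an overall sign, correspond to the functionals $r_{11}+r_{12}\pm\left(r_{21}+r_{22}\right)$ and $r_{11}-r_{12}\pm\left(r_{21}-r_{22}\right)$, and the row/column negations act transitively on them; so after applying suitable negations I may assume
\[
4s_{0}\left(p\right)=F:=r_{11}+r_{12}+r_{21}+r_{22}\ge0,
\]
and then $F<4$ because $s_{0}\left(p\right)<1$. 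In this normal form the four odd sign patterns are, up to an overall sign, $F-2r_{ij}$, $i,j\in\left\{1,2\right\}$, so that $4s_{1}\left(p\right)=\max_{i,j}\left|F-2r_{ij}\right|$.

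Next I would use the single elementary estimate $\left|F-2r_{ij}\right|\le6-F$, which holds because the three correlations other than $r_{ij}$ sum to $F-r_{ij}\le3$, with equality forcing those three to equal $1$. This yields $s_{0}\left(p\right)+s_{1}\left(p\right)\le\nicefrac{1}{4}\,F+\nicefrac{1}{4}\left(6-F\right)=\nicefrac{3}{2}$, consistent with (\ref{eq:properties}); and when $p\in P_{0}$ the equality must hold, so three of $r_{11},r_{12},r_{21},r_{22}$ equal $1$ and the remaining one equals $t:=F-3\in\left[-1,1\right)$, the excluded value $t=1$ being precisely $s_{0}\left(p\right)=1$. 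Substituting this into (\ref{eq:cosphericity}): three of $1-r_{11}^{2},1-r_{12}^{2},1-r_{21}^{2},1-r_{22}^{2}$ vanish, so each product on the right-hand side contains a zero factor and the right-hand side equals $0$; meanwhile the position carrying the value $t$ lies in exactly one of the pairs $\left\{r_{11},r_{12}\right\}$, $\left\{r_{21},r_{22}\right\}$, so exactly one of the products $r_{11}r_{12}$, $r_{21}r_{22}$ equals $t$ and the other equals $1$, making the left-hand side $\left|t-1\right|=1-t>0$. Hence (\ref{eq:cosphericity}) fails and $p$ is not QM-compliant.

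I expect the one step needing real care to be the symmetry bookkeeping in the reduction to the normal form: checking that the row/column negations genuinely act transitively on the four even sign patterns, and that it is QM-compliance itself --- not merely the algebraic identity of (\ref{eq:cosphericity}) --- that is preserved. Everything after the normal form is the one inequality above and a one-line substitution, with no separate treatment needed for the boundary case $t=-1$ (there the left-hand side is $1-t=2>0$ just the same).
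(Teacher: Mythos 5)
Your proof is correct and follows essentially the same route as the paper's: both arguments show that $s_{0}\left(p\right)+s_{1}\left(p\right)=\nicefrac{3}{2}$ forces three of the four correlations to have absolute value $1$, which annihilates the right-hand side of (\ref{eq:cosphericity}), while $s_{0}\left(p\right)<1$ keeps the left-hand side strictly positive. The only presentational difference is that you normalize signs by row/column negations, whereas the paper works directly with $\left|r_{ij}\right|$ via the identities $s_{i}\left(p\right)=\nicefrac{1}{4}\sum\left|r_{kl}\right|$ and $s_{1-i}\left(p\right)=s_{i}\left(p\right)-\nicefrac{1}{2}\min\left|r_{kl}\right|$.
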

\begin{proof}
One easily checks that
\[
\begin{array}{l}
s_{i}\left(p\right)=\nicefrac{1}{4}\left(\left|r_{11}\right|+\left|r_{12}\right|+\left|r_{21}\right|+\left|r_{22}\right|\right)\\
s_{1-i}\left(p\right)=s_{i}\left(p\right)-\nicefrac{1}{2}\min\left(\left|r_{11}\right|,\left|r_{12}\right|,\left|r_{21}\right|,\left|r_{22}\right|\right),
\end{array}
\]
where $i$ is 0 or 1 according as the number of positive correlations
$r_{ij}$ is even or odd. Without loss of generality, let the minimum
in the second expression equal $\left|r_{22}\right|$. Then 
\[
s_{0}\left(p\right)+s_{1}\left(p\right)=\nicefrac{1}{2}\left(\left|r_{11}\right|+\left|r_{12}\right|+\left|r_{21}\right|\right),
\]
and this can only equal $\nicefrac{3}{2}$ if each of the three correlations
equals $\pm1$. The cosphericity inequality (\ref{eq:cosphericity})
can only be satisfied then if 
\[
r_{22}=\pm1\textnormal{ and }r_{11}r_{12}=r_{21}r_{22}.
\]
It is easy to see that the latter is possible only if the number of
$+1$'s among the four $\pm1$ correlations is even. It follows that
$s_{0}\left(p\right)=1$. \end{proof}
\begin{thm}[no-forcing]
 The answer to Q1 is: $\Force_{QM}=E_{0}$ (whence the answer to
Q2 is: $P_{QM}$ is the set of all $p$ satisfying CHSH inequalities).\end{thm}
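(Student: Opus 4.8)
The inclusion $E_{0}\subseteq\Force_{QM}$ is already Lemma \ref{lem:Part of Force}, so the whole content of the theorem is the reverse inclusion $\Force_{QM}\subseteq E_{0}$. By Lemma \ref{lem:E0} this amounts to showing: \emph{every} connection vector $\varepsilon$ with $s_{0}\left(\varepsilon\right)<1$ fails to lie in $\Force_{QM}$, i.e.\ admits an outcome vector $p$ that is compatible with $\varepsilon$ but is not QM-compliant. The plan is to produce such a $p$ inside the set $P_{0}$ of Lemma \ref{lem:If-,-then} --- whose members are already known to violate the cosphericity inequality (\ref{eq:cosphericity}) --- and then to verify compatibility with $\varepsilon$ using the polytope description of Lemma \ref{lem:p-and-}.

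The search is cleanest in the $\left(s_{0},s_{1}\right)$ coordinates. A vector $p\in P_{0}$ is one lying on the outer edge $s_{0}\left(p\right)+s_{1}\left(p\right)=\nicefrac{3}{2}$ of the admissible triangle but strictly away from the vertex $\left(1,\nicefrac{1}{2}\right)$; I would parametrize it by $\sigma=s_{0}\left(p\right)\in\left[\nicefrac{1}{2},1\right)$, so that $s_{1}\left(p\right)=\nicefrac{3}{2}-\sigma$. Since the pairs $\left(s_{0}\left(p\right),s_{1}\left(p\right)\right)$ fill the whole triangle, each such $\sigma$ is realized by some outcome vector; explicitly one takes three of the correlations $r_{ij}$ equal to $\pm1$ and the fourth of modulus $\left|4\sigma-3\right|$, with an even number of positive correlations when $\sigma\geq\nicefrac{3}{4}$ and an odd number when $\sigma<\nicefrac{3}{4}$ --- this is exactly the computation carried out in the proof of Lemma \ref{lem:If-,-then}. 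By Lemma \ref{lem:p-and-}, such a $p$ is compatible with $\varepsilon$ precisely when $s_{0}\left(\varepsilon\right)+\left(\nicefrac{3}{2}-\sigma\right)\leq\nicefrac{3}{2}$ and $s_{1}\left(\varepsilon\right)+\sigma\leq\nicefrac{3}{2}$, that is, precisely when $s_{0}\left(\varepsilon\right)\leq\sigma\leq\nicefrac{3}{2}-s_{1}\left(\varepsilon\right)$.

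Thus the whole question reduces to finding a real number $\sigma$ with $\max\left\{ \nicefrac{1}{2},s_{0}\left(\varepsilon\right)\right\} \leq\sigma\leq\min\left\{ 1,\nicefrac{3}{2}-s_{1}\left(\varepsilon\right)\right\}$ and $\sigma<1$. I would check that this interval is nonempty and contains a point below $1$: its left endpoint is $<1$ because $s_{0}\left(\varepsilon\right)<1$ by hypothesis; its right endpoint is $\geq\nicefrac{1}{2}$ because $s_{1}\left(\varepsilon\right)\leq1$; and left $\leq$ right because (\ref{eq:properties}) gives $s_{0}\left(\varepsilon\right)+s_{1}\left(\varepsilon\right)\leq\nicefrac{3}{2}$, hence $s_{0}\left(\varepsilon\right)\leq\nicefrac{3}{2}-s_{1}\left(\varepsilon\right)$, while also $\nicefrac{1}{2}\leq\nicefrac{3}{2}-s_{1}\left(\varepsilon\right)$ and $s_{0}\left(\varepsilon\right)\leq1$. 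Picking any $\sigma<1$ in this interval and the corresponding $p$ as above gives $p\in P_{0}$ compatible with $\varepsilon$; by Lemma \ref{lem:If-,-then} this $p$ is not QM-compliant, so $\varepsilon\notin\Force_{QM}$. Hence $\Force_{QM}\subseteq E_{0}$, and with Lemma \ref{lem:Part of Force} we get $\Force_{QM}=E_{0}$. For the parenthetical claim on Q2: all members of $E_{0}$ share the same pair $\left(s_{0},s_{1}\right)=\left(1,\nicefrac{1}{2}\right)$ (Lemma \ref{lem:E0}), and compatibility in Lemma \ref{lem:p-and-} depends only on that pair, so $p\in P_{QM}$ iff $p$ is compatible with $\varepsilon_{0}$, which by Fine's theorem (as used in Lemma \ref{lem:Part of Force}) means exactly that $p$ satisfies the CHSH inequalities (\ref{eq:Bell/CH}).

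The main obstacle is the modelling step of the second paragraph --- realizing that $P_{0}$ is the right reservoir of counterexamples and that, for \emph{every} $\varepsilon$ outside $E_{0}$, the ``compatibility window'' $\left[s_{0}\left(\varepsilon\right),\nicefrac{3}{2}-s_{1}\left(\varepsilon\right)\right]$ for outcome vectors on the outer edge necessarily overlaps the non-QM segment $\left[\nicefrac{1}{2},1\right)$ of that edge. Once this is seen, the only inequality that does real work is $s_{0}\left(\varepsilon\right)+s_{1}\left(\varepsilon\right)\leq\nicefrac{3}{2}$ from (\ref{eq:properties}); the remaining interval-nonemptiness check and the explicit realization of $p$ are the routine calculations indicated above.
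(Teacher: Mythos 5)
Your argument is correct and follows essentially the same route as the paper: reduce to showing that every $\varepsilon$ with $s_{0}\left(\varepsilon\right)<1$ is compatible (via Lemma \ref{lem:p-and-}) with some $p$ on the edge $s_{0}\left(p\right)+s_{1}\left(p\right)=\nicefrac{3}{2}$ with $s_{0}\left(p\right)<1$, then invoke Lemma \ref{lem:If-,-then} to conclude that this $p$ violates cosphericity. The only real difference is how the witness $p$ is chosen. The paper splits into two cases, taking $\left(s_{0}\left(p\right),s_{1}\left(p\right)\right)=\left(\nicefrac{1}{2},1\right)$ when $s_{0}\left(\varepsilon\right)\leq\nicefrac{1}{2}$ and $\left(\nicefrac{3}{2}-s_{1}\left(\varepsilon\right),s_{1}\left(\varepsilon\right)\right)$ when $\nicefrac{1}{2}<s_{0}\left(\varepsilon\right)<1$; you instead show that the whole compatibility window $\left[\max\left\{ \nicefrac{1}{2},s_{0}\left(\varepsilon\right)\right\} ,\min\left\{ 1,\nicefrac{3}{2}-s_{1}\left(\varepsilon\right)\right\} \right]$ is nonempty and meets $\left[\nicefrac{1}{2},1\right)$, using only $s_{0}\left(\varepsilon\right)+s_{1}\left(\varepsilon\right)\leq\nicefrac{3}{2}$ and $s_{0}\left(\varepsilon\right),s_{1}\left(\varepsilon\right)\leq1$ from (\ref{eq:properties}). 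Your uniform choice is actually more robust: the paper's second case prescribes $s_{0}\left(p\right)=\nicefrac{3}{2}-s_{1}\left(\varepsilon\right)$, which exceeds $1$ (hence is not realizable by any outcome vector) whenever $s_{1}\left(\varepsilon\right)<\nicefrac{1}{2}$ --- this does occur, e.g.\ for $r=\left(0.8,0.8,0.8,0.8\right)$, where $s_{0}\left(\varepsilon\right)=0.8$ and $s_{1}\left(\varepsilon\right)=0.4$ --- whereas your $\sigma=\max\left\{ \nicefrac{1}{2},s_{0}\left(\varepsilon\right)\right\} $ always lies in $\left[\nicefrac{1}{2},1\right)$ and satisfies both compatibility inequalities. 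Your explicit realization of the edge points (three correlations $\pm1$, the fourth of modulus $\left|4\sigma-3\right|$ with the appropriate parity) and your treatment of Q2 (all of $E_{0}$ shares $\left(s_{0},s_{1}\right)=\left(1,\nicefrac{1}{2}\right)$, so $P_{QM}$ is the set of $p$ compatible with $\varepsilon_{0}$, i.e.\ the CHSH polytope by Fine's theorem) both match what the paper asserts.
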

\begin{proof}
We know from (\ref{eq:properties}) that $s_{0}\left(\varepsilon\right)\leq1$,
and from Lemmas \ref{lem:E0} and \ref{lem:Part of Force} we know
that $s_{0}\left(\varepsilon\right)=1$ describes the set $E_{0}\subset\Force_{QM}$.
The theorem is proved by showing that $\Force_{QM}$ does not contain
any $\varepsilon$ with $s_{0}\left(\varepsilon\right)<1$. From the
definition of $\Force_{QM}$, if there is a $p$ with which $\varepsilon$
is compatible but which does not satisfy the cosphericity inequality
(\ref{eq:cosphericity}), then $\varepsilon\not\in\Force_{QM}$. By
Lemma \ref{lem:p-and-}, if for a given $\varepsilon$ one chooses
a $p$ such that
\[
\begin{array}{c}
s_{0}\left(\varepsilon\right)+s_{1}\left(p\right)\le s_{0}\left(p\right)+s_{1}\left(p\right)=\nicefrac{3}{2},\\
s_{1}\left(\varepsilon\right)+s_{0}\left(p\right)\le s_{0}\left(p\right)+s_{1}\left(p\right)=\nicefrac{3}{2},
\end{array}
\]
then $p$ and $\varepsilon$ are compatible. If $s_{0}\left(\varepsilon\right)\leq\nicefrac{1}{2}$,
then choose $p$ with $s_{1}\left(p\right)=1$ and $s_{0}\left(p\right)=\nicefrac{1}{2}$
to satisfy this system. If $\nicefrac{1}{2}<s_{0}\left(\varepsilon\right)<1$,
then choose $p$ with $s_{1}\left(p\right)=s_{1}\left(\varepsilon\right)$
and $s_{0}\left(p\right)=\nicefrac{3}{2}-s_{1}\left(\varepsilon\right)$
to satisfy this system. By Lemma \ref{lem:If-,-then}, all these choices
of $p$ belong to $P_{0}$ and therefore violate the cosphericity
inequality. It follows that all $\varepsilon$ with $s_{0}\left(\varepsilon\right)<1$
do not belong to $\Force_{QM}$. 
\end{proof}
One consequence of this theorem is that KPT in which p-couplings are
constrained by connections cannot \emph{match} QM precisely: if it
allows \emph{only} for QM-compliant $p$ (as does the choice of $\varepsilon=\varepsilon_{0}$),
then it only allows for a proper subset thereof; and if it allows
for \emph{all} QM-compliant $p$, then it also allows for some $p$
that are QM-contravening (as does the connection vector $\varepsilon_{ind}=\left(\nicefrac{1}{4},\nicefrac{1}{4},\nicefrac{1}{4},\nicefrac{1}{4}\right)$
that is compatible with all possible outcome vectors $p$). We now
generalize this no-matching statement for connection vectors to arbitrary
marginal distributions imposed on p-couplings $H$ in (\ref{eq:coupling vector}). 

\begin{figure*}
\begin{centering}
\includegraphics[scale=0.5]{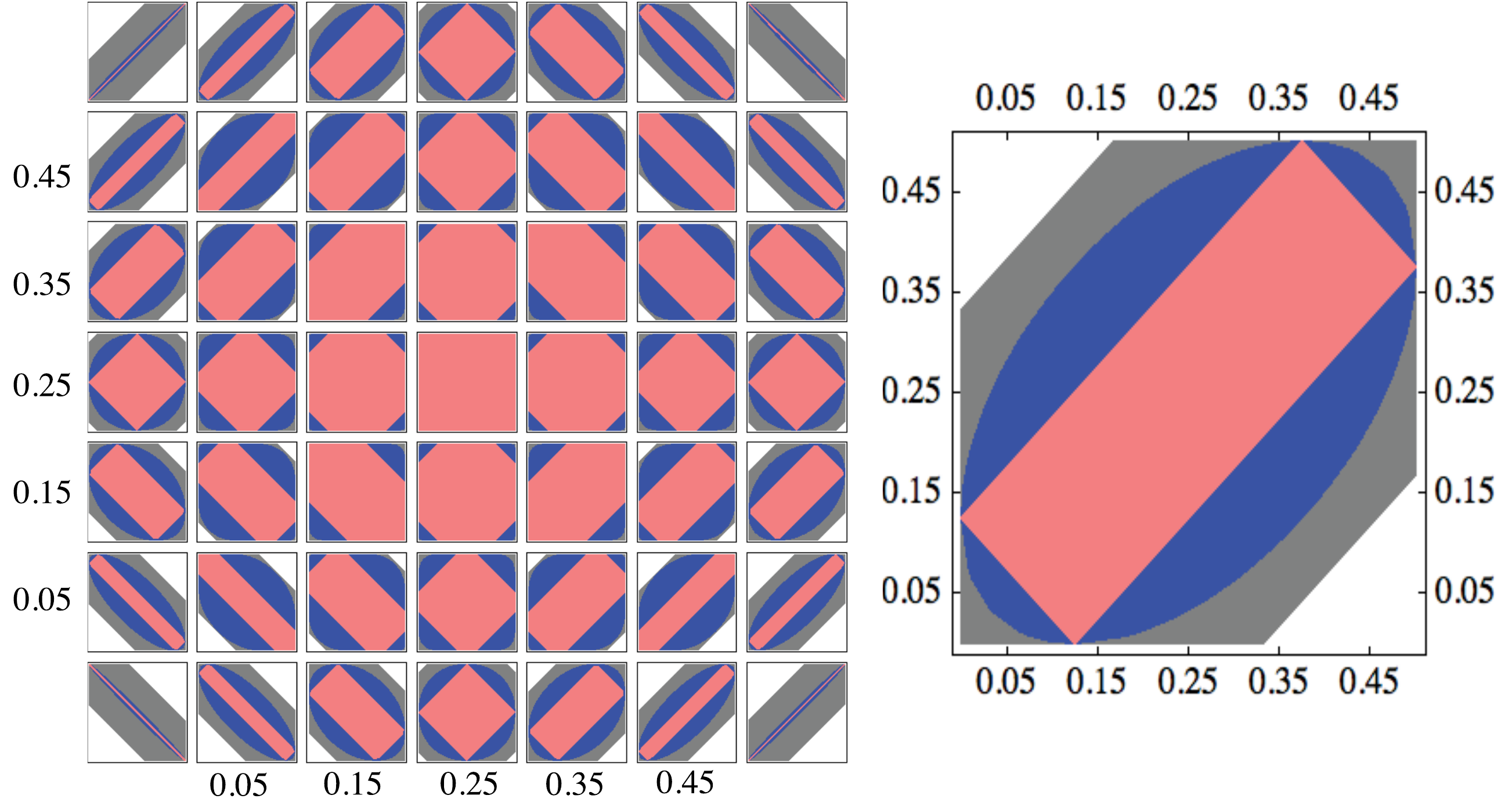}
\par\end{centering}

\caption[.]{Areas of outcome vectors $p=\left(p_{11},p_{12},p_{21},p_{22}\right)$,
two components of which (no matter which) define panels on the left,
and the remaining two form the axes of each panel, as shown on the
right. The pink area contains all $p$ satisfying the CHSH inequalities\emph{
}(\ref{eq:Bell/CH}). One can constrain the p-couplings $H$ by marginal
distributions (notably, by putting $\mbox{\ensuremath{\varepsilon}=\ensuremath{\left(0,0,0,0\right)}}$),
so that the set of all $p$ compatible with them coincides with the
pink area {[}\emph{\ref{enu:Bell,-J.-(1964).}-\ref{enu:Fine,-A.-(1982b).}}{]}.
The gray area contains all $p$ satisfying the Tsirelson inequalities
(\ref{eq:Tsirelson}). One can constrain the p-couplings $H$ by marginal
distributions (e.g., by putting $\varepsilon=\left(\nicefrac{\left(\sqrt{2}-1\right)}{8},\nicefrac{\left(\sqrt{2}-1\right)}{8},\nicefrac{\left(\sqrt{2}-1\right)}{8},\nicefrac{\left(\sqrt{2}-1\right)}{8}\right)$),
so that the set of all $p$ compatible with them coincides with the
gray area {[}\emph{\ref{enu:E.N.-Dzhafarov,-J.V.PLOS}}{]}. The blue
area (that includes the pink area and is included in the gray one)
contains all $p$ satisfying the cosphericity inequality (\ref{eq:cosphericity}),
i.e., all QM-compliant $p$. The no-matching theorem says that, for
any set of marginal distributions, the set of all $p$ compatible
with them never coincides with the blue area precisely.\label{fig:Areas-of-outcome}}
\end{figure*}

The empirically inaccessible 2-marginals $\left(A_{i1},A_{i2}\right)$
and $\left(B_{1j},B_{2j}\right)$ constrain $H$ by specifying a connection
vector $\varepsilon$. There are, however, other empirically inaccessible
marginals, such as $\left(A_{11},A_{22}\right)$, $\left(A_{11},A_{12},A_{21},A_{22}\right)$,
$\left(A_{11},B_{22},B_{12}\right)$, etc. Each of them, once its
distribution is specified, constrains the possible distributions of
$H$.
\begin{thm}[no-matching]
There is no set of marginal distributions imposed on $H$ such that
outcome vectors $p$ are compatible with this set if and only if they
are QM-compliant. \end{thm}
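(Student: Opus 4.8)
The plan is to exploit a structural mismatch between the two sets being compared. For \emph{any} prescribed collection $\mathcal{D}$ of marginal distributions of subsets of the components of $H$, the set $P(\mathcal{D})$ of outcome vectors $p$ compatible with $\mathcal{D}$ is a convex \emph{polytope}, whereas the set of QM-compliant outcome vectors --- i.e., those $p$ satisfying the cosphericity inequality (\ref{eq:cosphericity}) --- is \emph{not} a polytope, because part of its boundary is curved. Since a polytope can never coincide with a non-polytope, no $\mathcal{D}$ can carve out exactly the QM-compliant region, which is the assertion of the theorem.

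The first step is to check that every compatible set is a polytope. The distribution of the eight-component vector $H$ of (\ref{eq:coupling vector}) is a point $\mu$ of the finite-dimensional probability simplex over the $2^{8}$ sign patterns $\left(a_{11},\ldots,b_{22}\right)=\left(\pm1,\ldots,\pm1\right)$; since $H$ has only finitely many values, \emph{any} collection $\mathcal{D}$ of prescribed marginal distributions --- together with the standing requirement that all $1$-marginals equal $\nicefrac{1}{2}$ --- imposes only finitely many linear equalities on the coordinates of $\mu$, and, combined with $\mu\ge 0$ and $\sum\mu=1$, cuts out a polytope of admissible distributions $\mu$. Each $p_{ij}$ is a linear functional of $\mu$, namely the total $\mu$-mass of the sign patterns with $a_{ij}=b_{ij}=1$, so the assignment $\mu\mapsto p$ is linear and $P(\mathcal{D})$ is precisely the image of the polytope of admissible $\mu$ under this linear map. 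A linear image of a polytope is again a polytope, so $P(\mathcal{D})$ is a (possibly empty) closed convex polytope; in particular it has finitely many extreme points.

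The second step is to show that the QM-compliant set is not a polytope, for which it suffices to produce one affine slice on which it is manifestly not polyhedral. Put $p_{21}=\nicefrac{1}{2}$ and $p_{22}=\nicefrac{1}{4}$, i.e.\ $r_{21}=1$ and $r_{22}=0$. Then on this slice the right-hand side of (\ref{eq:cosphericity}) reduces to $\sqrt{(1-r_{11}^{2})(1-r_{12}^{2})}$ and the left-hand side to $\left|r_{11}r_{12}\right|$; squaring both (nonnegative) sides, the cosphericity inequality collapses to
\[
r_{11}^{2}+r_{12}^{2}\le 1, \qquad\text{equivalently}\qquad \left(4p_{11}-1\right)^{2}+\left(4p_{12}-1\right)^{2}\le 1,
\]
a Euclidean disk of radius $\nicefrac{1}{4}$ centered at $\left(p_{11},p_{12}\right)=\left(\nicefrac{1}{4},\nicefrac{1}{4}\right)$. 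The intersection of a polytope with an affine subspace is again a polytope, while a disk of positive radius is not one (it has infinitely many extreme points). Hence the QM-compliant set is not a polytope, and by the first step it therefore cannot equal $P(\mathcal{D})$ for any $\mathcal{D}$.

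The step I expect to require the most care is the first one: one must pin down exactly what ``a set of marginal distributions imposed on $H$'' is allowed to be, and verify that in every admissible case the resulting constraints on the $2^{8}$ probabilities of $H$ form a \emph{finite linear} system, so that $P(\mathcal{D})$ is genuinely a polytope and not merely a convex set. This is immediate because $H$ has only finitely many values; once granted, everything else is routine, and the curved-slice argument even localizes the obstruction to an explicit two-dimensional disk sitting inside the blue region of Figure~\ref{fig:Areas-of-outcome}.
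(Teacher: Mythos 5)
Your proposal is correct and follows essentially the same route as the paper: the set of outcome vectors compatible with any fixed collection of marginal constraints is the linear image of a polytope of distributions on the $2^{8}$ values of $H$, hence a polytope, while the QM-compliant set defined by (\ref{eq:cosphericity}) is not. Your only departure is that where the paper appeals to Figure \ref{fig:Areas-of-outcome} for the curvilinearity of the QM region, you verify it analytically by exhibiting the slice $r_{21}=1$, $r_{22}=0$ on which cosphericity reduces to the disk $r_{11}^{2}+r_{12}^{2}\le1$ --- a welcome tightening of that step.
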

\begin{proof}
Observe first that a distribution of any $k$-marginal $\left(X_{1},\ldots,X_{k}\right)$
of $H$ can be presented by $2^{k}$ probabilities 
\begin{equation}
\Pr\left[X_{i_{1}}=1,\ldots,X_{i_{k'}}=1\right],\label{eq:all 1's}
\end{equation}
with all values equated to 1, for all $k'$-(sub)marginals of $X_{1},\ldots,X_{k}$
($0\leq k'\leq k$). This includes the empty subset, for which we
put $\Pr\left[\right]=1$. If we fix distributions of several marginals,
with the numbers of components $k_{1},\ldots,k_{m}$, then the total
number of different probabilities is $N<2^{k_{1}}+\ldots+2^{k_{m}}$.
This set of probabilities constrains the set of possible outcome vectors
$p$ to those for which one can find a $2^{8}$-component vector $Q$
with the following properties: 
\begin{equation}
\begin{array}{c}
M_{1}Q=p\\
\textnormal{subject to}\\
M_{2}Q=P,Q\geq0.
\end{array}\label{eq:linear programming}
\end{equation}
Here, $Q$ is the vector of probabilities assigned to all possible
values of $H$ (and $Q\geq0$ is understood componentwise), $M_{1},M_{2}$
are Boolean (0/1) matrices with dimensions $4\times2^{8}$ and $N\times2^{8}$,
respectively, and $P$ is the vector of all probabilities of the form
(\ref{eq:all 1's}) that define the distributions of the marginals
chosen. The entries of the matrices are defined by the following rule:
(1) choose the row of the matrix corresponding to the probability
$\Pr\left[X_{1}=1,\ldots,X_{k}=1\right]$; (2) choose the column of
$M$ corresponding to values
\[
\left(A{}_{ij}=a_{ij},B{}_{ij}=b_{ij}:i,j\in\left\{ 1,2\right\} \right)
\]
of $H$ ($a{}_{ij},b_{ij}\in\left\{ -1,1\right\} $); (3) put 1 in
the intersection of this row and this column if and only if $a_{ij}$
and $b_{ij}$ equal 1 for all $A_{ij}$ and $B{}_{ij}$ that belong
$\left(X_{1},\ldots,X_{k}\right)$; (4) for the 0-marginal (empty
set), all entries are 1. In matrix $M_{1}$ the marginals for its
four rows are $\left(A{}_{ij},B{}_{ij}\right)$, $i,j\in\left\{ 1,2\right\} $,
and the row corresponding to, e.g., $\left(A{}_{11},B{}_{11}\right)$
contains $1$ in each cell whose column corresponds to $H$-values
with 
\[
\left(\begin{array}{ccccc}
A_{11} & B{}_{11} & A{}_{12} & \ldots & B{}_{22}\\
1 & 1 & \textnormal{any} & \ldots & \textnormal{any}
\end{array}\right).
\]
To illustrate the structure of matrix $M_{2}$ and vector $P$, assume
that one of the marginals chosen is $\left(A_{11},A_{12},A_{21},A_{22}\right)$.
Then $P$ includes the 16 probabilities
\[
\begin{array}{c}
\Pr\left[A_{11}=1,A_{12}=1,A_{21}=1,A_{22}=1\right]\\
\Pr\left[A_{11}=1,A_{12}=1,A_{21}=1\right],\ldots,\Pr\left[A_{12}=1,A_{21}=1,A_{22}=1\right]\\
\Pr\left[A_{11}=1,A_{12}=1\right],\ldots,\Pr\left[A_{21}=1,A_{22}=1\right]\\
\Pr\left[A_{11}=1\right]=\nicefrac{1}{2},\ldots,\Pr\left[A_{22}=1\right]=\nicefrac{1}{2}\\
\Pr\left[\right]=1.
\end{array}
\]
The row of $M_{2}$ corresponding to, say, $\left(A_{11},A_{12},A_{21}\right)$,
contains 1 for all columns with $H$-values
\[
\left(\begin{array}{cccccccc}
A_{11} & B{}_{11} & A{}_{12} & B{}_{12} & A{}_{21} & B{}_{21} & A{}_{22} & B{}_{22}\\
1 & \textnormal{any} & 1 & \textnormal{any} & 1 & \textnormal{any} & \textnormal{any} & \textnormal{any}
\end{array}\right).
\]
Now, the set of all vectors $p$ for which a $Q$ exists satisfying
(\ref{eq:linear programming}) forms a polytope confined within a
$\left[0,\nicefrac{1}{2}\right]^{4}$ cube. This polytope can be empty
(if the distributions for the marginals chosen are not compatible),
consist of a single point (e.g., if the marginals chosen include $H$
itself), or have any dimensionality between 1 and 4. The statement
of the theorem follows from the fact that the set of QM-compliant
$p$, those satisfying (\ref{eq:cosphericity}), is not a polytope.
Figure \ref{fig:Areas-of-outcome} makes this fact obvious, by showing
the curvilinear shape of the two-dimensional cross-sections of the
set of QM-compliant $p$.
\end{proof}

\section{Conclusion}

We have seen that the Bell-type theorems do not allow one to gauge
the ability of KPT for dealing with QM-compliant spin distributions
in entangled particles. The Bell-type theorems are confined to the
NC assumption, and the latter is not an integral part of KPT. The
power of KPT is much greater if one uses its basic conceptual apparatus
to systematically distinguish the distribution of spins (which, due
to the no-signaling requirement, can never be affected by the measurement
settings chosen in distant particles) and the identity of spins as
random variables \textendash{}\textendash{}\textendash{} which, in
accordance with the CbD principle, may very well depend on the settings
chosen across all particles (without violating any known laws of physics).
Our results show, however, that QM preserves its special status even
at this, much greater level of generality. The contextual KPT models
(with marginal constraints) are not able to match QM predictions precisely. 

Figure \ref{fig:Areas-of-outcome} serves an additional purpose of
demonstrating that this failure of the contextual KPT models with
marginal constraints is not due to its general inability to match
theories outside the scope of classical mechanics. It is the nonlinearity
of the area of QM-compliant outcome vectors rather than their non-classicality
that is responsible for the no-matching theorem. Thus, consider the
\emph{Tsirelson inequalities} {[}\emph{\ref{enu:Cirel'son-BS-(1980)},\ref{enu:Landau-LJ-(1987)}}{]}
for spin-$\nicefrac{1}{2}$ particles (with equiprobable spin-up and
spin-down in all directions), 
\begin{equation}
\frac{1-\sqrt{2}}{2}\leq p_{11}+p_{12}+p_{21}+p_{22}-2p_{ij}\leq\frac{1+\sqrt{2}}{2}\quad\left(i,j\in\left\{ 1,2\right\} \right).\label{eq:Tsirelson}
\end{equation}
They are known to be satisfied by all QM-compliant outcome vectors
$p$, and they impose the lower and upper bound on the QM-permitted
violations of the CHSH inequalities: a linear combination $p_{11}+p_{12}+p_{21}+p_{22}-2p_{ij}$
can achieve the values $\frac{1-\sqrt{2}}{2}$ and $\frac{1+\sqrt{2}}{2}$
by appropriate choices of the directions $\alpha_{1},\alpha_{2},\beta_{1},\beta_{2}$
in (\ref{eq:cosine law}). 

Our approach allows one to offer a KPT account for the Tsirelson bounds
by postulating that, in the Alice-Bob system depicted in Figure \ref{fig:Schematic-representation-of},
the connection vectors $\varepsilon$ satisfy
\begin{equation}
s_{0}\left(\varepsilon\right)=\frac{3-\sqrt{2}}{2},\; s_{1}\left(\varepsilon\right)\leq\frac{1}{2},\label{eq:Tsirelson-compatible}
\end{equation}
where $s_{0}\left(\varepsilon\right)$ and $s_{1}\left(\varepsilon\right)$
are defined in (\ref{eq:s0})-(\ref{eq:s1}). It has been shown {[}\emph{\ref{enu:E.N.-Dzhafarov,-J.V.PLOS}}{]}
that such a connection vector is compatible with those and only those
outcome vectors $p$ that satisfy the Tsirelson inequalities (\ref{eq:Tsirelson}). 

Recall that a connection vector $\varepsilon$ is compatible with
those and only those $p$ that satisfy the CHSH inequalities (\ref{eq:Bell/CH})
if and only if $s_{0}\left(\varepsilon\right)=1.$ The latter means
that $A_{i1}$ and $A_{i2}$ in the classical-mechanical system are
either always equal or always opposite, and the same is true for $B_{1j}$
and $B_{2j}$. With $\varepsilon$ satisfying (\ref{eq:Tsirelson-compatible}),
$A_{i1}$ and $A_{i2}$, as well as $B_{1j}$ and $B_{2j}$, may be
unequal and non-opposite with some small probabilities, e.g., $\nicefrac{\left(\sqrt{2}-1\right)}{8}$,
if one assumes that these probabilities are the same for all four
connections $\left(A_{i1},A_{i2}\right)$, $\left(B_{1j},B_{2j}\right)$,
$i,j\in\left\{ 1,2\right\} $. If these probabilities were larger,
the connection vector would be compatible with outcome vectors exceeding
the Tsirelson bounds.

This is not, of course, a physical explanation, but a principled way
of embedding the Tsirelson bounds within the framework of KPT. The
connection probabilities themselves do not have an interpretation
within a physical theory. Recall, however, that connections are merely
2-marginals of a p-coupled eight-component vector $H$ in (\ref{eq:coupling vector}),
and that we use the connections to delineate a class of such p-couplings.
As we explained in Introduction, a p-coupling $H$ allows for the
same interpretation in terms of ``hidden variables'' as the one
traditionally used (whether or not one calls it ``physical'') in
the derivation and analysis of the Bell-type theorems. 

Recently, Cabello {[}\ref{enu:A.-Cabello:-Simple}{]} attempted to
find an account for the Tsirelson bounds using another principle.
In his analysis of the Alice-Bob paradigm, he considers sequences
of stochastically independent events (using our notation)
\begin{equation}
\begin{array}{c}
\left(A_{i_{1}j_{1}}=a_{1},B_{i_{1}j_{1}}=b_{1}\right),\left(A_{i_{2}j_{2}}=a_{2},B_{i_{2}j_{2}}=b_{2}\right),\ldots,\left(A_{i_{n}j_{n}}=a_{n},B_{i_{n}j_{n}}=b_{n}\right),\\
\\
\left(A_{i'_{1}j'_{1}}=a'_{1},B_{i'_{1}j'_{1}}=b'_{1}\right),\left(A_{i'_{2}j'_{2}}=a'_{1},B_{i'_{2}j'_{2}}=b'_{2}\right),\ldots,\left(A_{i'_{n}j'_{n}}=a'_{n},B_{i'_{n}j_{'n}}=b'_{n}\right),\\
\ldots
\end{array}
\end{equation}
Two such sequences are called (mutually) exclusive if, for at least
one $k\in\left\{ 1,2,\ldots,n\right\} $, either $i_{k}=i'_{k}$ and
$a_{k}\not=a'_{k}$ or $j_{k}=j'_{k}$ and $b_{k}\not=b'_{k}$. Cabello
postulates then that
\begin{equation}
\sum\Pr\left[A_{i_{1}j_{1}}=a_{1},B_{i_{1}j_{1}}=b_{1}\right]\Pr\left[A_{i_{2}j_{2}}=a_{2},B_{i_{2}j_{2}}=b_{2}\right]\ldots\Pr\left[A_{i_{n}j_{n}}=a_{n},B_{i_{n}j_{n}}=b_{n}\right]\leq1
\end{equation}
if the sum is taken over any set of pairwise exclusive sequence, for
any $n$. This postulate allows him to successfully derive a certain
QM inequality {[}\ref{enu:A.A.-Klyachko,-M.A.}{]}, and he conjectures
that the Tsirelson bounds follow from this postulate too. This conjecture,
however, remains unproven, and the relation of Cabello's postulate
to KPT in general and to our characterization of the Tsirelson bounds
by means of compatible connections remains unclear. 

It may be useful to compare our approach to constructing a Kolmogovian
account of the EPR paradigm to the only other systematic way of doing
this known to us. We call it \emph{conditionalization}. It consists
in considering the settings $\left(\alpha_{i},\beta_{j}\right)$ as
values of a random variable $C$, and treating the spins as random
variables whose distributions are conditoned upon the values of $C$.
Avis, Fischer, Hilbert, and Khrennikov {[}\emph{\ref{enu:D.-Avis,-P.}}{]}
implement this approach by considering the system of jointly distributed
$\left(C,A'_{1},A'_{2},B'_{1},B'_{2}\right)$ such that
\begin{equation}
\Pr\left[A'_{i}=\pm1,B'_{j}=\pm1\,|\, C=\left(\alpha_{i},\beta_{j}\right)\right]=\Pr\left[A_{ij}=\pm1,B_{ij}=\pm1\right].
\end{equation}
They describe two ways of achieving this. In one of them $A'_{i}$
and $B'_{j}$ have three possible values, $\pm1$ and 0 (that can
interpreted as ``no value''), and
\begin{equation}
\Pr\left[A'_{i}=a,B'_{j}=b,\, A'_{3-i}=a',B'_{3-j}=b',|\, C=\left(\alpha_{i},\beta_{j}\right)\right]=\begin{cases}
\Pr\left[A_{ij}=a,B_{ij}=b\right] & \textnormal{if }a\not=0,b\not=0,a'=b'=0,\\
0 & \textnormal{otherwise}.
\end{cases}
\end{equation}
In another implementation, $A'_{i}$ and $B'_{j}$ have two possible
values, $\pm1$, and
\begin{equation}
\Pr\left[A'_{i}=a,B'_{j}=b,\, A'_{3-i}=a',B'_{3-j}=b',|\, C=\left(\alpha_{i},\beta_{j}\right)\right]=\frac{1}{4}\Pr\left[A_{ij}=a,B_{ij}=b\right].
\end{equation}
In both cases the joint distribution of $\left(C,A'_{1},A'_{2},B'_{1},B'_{2}\right)$
is well-defined for any distribution $C$ with non-zero values of
$\Pr\left[C=\left(\alpha_{i},\beta_{j}\right)\right]$, $i,j\in\left\{ 1,2\right\} $. 

An even simpler implementation of conditionalization would be to introduce
just two binary ($+1/-1$) random variables $A',B'$, and to construct
a joint distribution of $\left(C,A',B'\right)$ by positing
\begin{equation}
\Pr\left[A'=a,B'=b\,|\, C=\left(\alpha_{i},\beta_{j}\right)\right]=\Pr\left[A_{ij}=a,B_{ij}=b\right].
\end{equation}

Conditionalization is universally applicable, and it indeed achieves
the goal of embedding all imaginable distributions of $\left(A_{ij},B_{ij}\right)$,
$i,j\in\left\{ 1,2\right\} $, into the framework of KPT. In fact,
it veridically describes the experiment in which the settings $\left(\alpha_{i},\beta_{j}\right)$
are chosen randomly according to some distribution. As we argue in
greater detail elsewhere {[}\ref{enu:E.N.-Dzhafarov,-Conditionalization}{]},
however, this approach has its weakness: it is not only universal,
it is also indiscriminate. Conditionalization applies in precisely
the same way to the distributions of $\left(A_{ij},B_{ij}\right)$
whether they are subject to classical-mechanical constraints, to QM
constraints, or anything else, the choice of the distribution for
$C$ being irrelevant. The conditionalization approach therefore can
be compared to saying that the four pairs $\left(A_{ij},B_{ij}\right)$
can always be p-coupled as stochastically independent pairs: this
is true, but not elucidating. By contrast, our contextual approach
is aimed at characterizing different constraints imposed on the distributions
of $\left(A_{ij},B_{ij}\right)$ by their compatibility with different
distributions of the connections $\left(A_{i1},A_{i2}\right),\left(B_{1j},B_{2j}\right)$,
$i,j\in\left\{ 1,2\right\} $ (or other marginals). This allows us,
in particular, to characterize the classical-mechanical and Tsirelson
constraints, and to identify the QM constraint as falling beyond the
reach of such characterization.

\subsection*{Acknowledgments}

This work was supported by NSF grant SES-1155956.

\section*{References}
\begin{enumerate}
\item \label{enu:Bell,-J.-(1964).}J. Bell: On the Einstein-Podolsky-Rosen
paradox. \emph{Physics} \textbf{1}, 195-200 (1964).
\item \label{enu:J.F.-Clauser,-M.A.}J.F. Clauser, M.A. Horne, A. Shimony,
R.A. Holt: Proposed experiment to test local hidden-variable theories.
\emph{Phys. Rev. Lett.} \textbf{23}, 880-884 (1969).
\item \label{enu:Clauser,-J.F.-and}J.F. Clauser, M.A. Horne: Experimental
consequences of objective local theories\emph{. Phys. Rev. D}, \textbf{10},
526-535 (1974).
\item \label{enu:Fine,-A.-(1982b).}A. Fine: Hidden variables, joint probability,
and the Bell inequalities.\emph{ Phys. Rev. Lett.} \textbf{48}, 291-295
(1982).
\item \label{enu:A.-Kolmogorov,-Foundations}A. Kolmogorov: \emph{Foundations
of the Theory of Probability} (New York, Chelsea, 1956). 
\item \label{enu:S.-Kochen,-F.}S. Kochen, F. Specker: The problem of hidden
variables in quantum mechanics. J. Math. Mech. \textbf{17}, 59\textendash{}87
(1967).
\item \label{enu:F.-Laudisa-(1997)}F. Laudisa: Contextualism and nonlocality
in the algebra of EPR observables. \emph{Phil. Sci.} \textbf{64},
478-496 (1997).
\item \label{enu:R.W.-Spekkens,-D.}R.W. Spekkens, D. H. Buzacott, A. J.
Keehn, B. Toner, G. J. Pryde: Universality of state-independent violation
of correlation inequalities for noncontextual theories. \emph{Phys.
Rev. Lett.} \textbf{102}, 010401 (2009).
\item \label{enu:G.-Kirchmair,-F.}G. Kirchmair, F. Zähringer, R. Gerritsma,
M. Kleinmann, O. Gühne, A. Cabello, R. Blatt, C. F. Roos: State-independent
experimental test of quantum contextuality. \emph{Nature} \textbf{460},
494-497 (2009).
\item \label{enu:P.-Badzia=000327g1,-I.}P. Badzi\c{a}g1, I. Bengtsson1,
A. Cabello, I. Pitowsky: Universality of state-independent violation
of correlation inequalities for noncontextual theories. \emph{Phys.
Rev. Lett.} \textbf{103}, 050401 (2009).
\item \label{enu:A.Yu.-Khrennikov:-Contextual}A.Yu. Khrennikov: \emph{Contextual
Approach to Quantum Formalism. Fundamental Theories of Physics} \textbf{160}
(Dordrecht, Springer, 2009).
\item \label{enu:A.-Cabello:-Simple}A. Cabello: Simple explanation of the
quantum violation of a fundamental inequality. \emph{Phys. Rev. Lett.
}\textbf{110}, 060402 (2013).
\item \label{enu:E.N.-Dzhafarov,-J.V.PLOS}E.N. Dzhafarov, J.V. Kujala:
All-possible-couplings approach to measuring probabilistic context.
\emph{PLoS ONE 8(5): e61712. doi:10.1371/ journal.pone.0061712} (2013).
\item \label{enu:E.N.-Dzhafarov,-J.V.Qualified}E.N. Dzhafarov, J.V. Kujala:
A qualified Kolmogorovian account of probabilistic contextuality.
\emph{Lect. Notes in Comp. Sci}. (in press). (available as arXiv:1304.4546.)
\item \label{Dzhafarov,-E.N.,-&Advances}E.N. Dzhafarov, J.V. Kujala: Random
variables recorded under mutually exclusive conditions: Contextuality-by-Default.
\emph{Adv. in Cogn. Neurodyn. IV}. (in press). (available as arXiv:1309.0962.)
\item \label{enu:D.-Bohm,-&}D. Bohm, Y. Aharonov: Discussion of experimental
proof for the paradox of Einstein, Rosen and Podolski. \emph{Phys.
Rev}., \textbf{108}, 1070-1076 (1957).
\item \label{enu:E.N.-Dzhafarov,-J.V.}E.N. Dzhafarov, J.V. Kujala: On selective
influences, marginal selectivity, and Bell/CHSH inequalities. \emph{Topics
Cog. Sci.} (in press). (available as arXiv:1211.2342.)
\item \label{enu:A.-Yu.-Khrennikov,}A.Yu. Khrennikov: Bell-Boole inequality:
Nonlocality or probabilistic incompatibility of random variables?
\emph{Entropy} \textbf{10}, 19-32 (2008).
\item \label{enu:A.-Yu.-Khrennikov,-1}A.Yu. Khrennikov: EPR\textendash{}Bohm
experiment and Bell\textquoteright{}s inequality: Quantum physics
meets probability theory. \emph{Theor. Math. Phys.}\textbf{ 157},
1448\textendash{}1460 (2008).
\item \label{enu:S.-Gudder:-On}S. Gudder: On hidden-variable theories.
\emph{J. Math Phys.} \textbf{2}, 431-436 (1970).
\item \label{enu:H.-Thorisson,-Coupling,}H. Thorisson:\emph{ Coupling,
Stationarity, and Regeneration} (New York, Springer, 2000).
\item \label{enu:Landau-LJ-(1988)}L.J. Landau: Empirical two-point correlation
functions. \emph{Found. Phys.} \textbf{18}, 449--460 (1988).
\item \label{enu:Cabello-A-(2005)}A. Cabello: How much larger quantum correlations
are than classical ones. \emph{Phys. Rev.} A \textbf{72}, 12113 (2005).
\item \label{enu:Kujala,-J.-V.,2008}J.V. Kujala, E.N. Dzhafarov: Testing
for selectivity in the dependence of random variables on external
factors.\emph{ J. Math. Psych}. \textbf{52}, 128-144 (2008).
\item \label{enu:Cirel'son-BS-(1980)}B.S. Tsirelson: Quantum generalizations
of Bell's inequality.\emph{ Lett. Math. Phys.} \textbf{4}, 93--100
(1980).
\item \label{enu:Landau-LJ-(1987)}L.J. Landau: On the violation of Bell's
inequality in quantum theory. \emph{Phys. Lett. A} \textbf{120}, 54--56
(1987).
\item \label{enu:E.N.-Dzhafarov,-in press 2}E.N. Dzhafarov, J.V. Kujala:
Selectivity in probabilistic causality: Where psychology runs into
quantum physics. \emph{J. Math. Psych}., \textbf{56}, 54-63 (2012).
\item \label{enu:E.N.-Dzhafarov,-J.V._LNCS1}E.N. Dzhafarov, J.V. Kujala:
Quantum entanglement and the issue of selective influences in psychology:
An overview. \emph{Lect. Notes in Comp. Sci}. \textbf{7620}: 184-195
(2012).
\item \label{enu:E.N.-Dzhafarov,-&in press}E.N. Dzhafarov, J.V. Kujala:
Order-distance and other metric-like functions on jointly distributed
random variables. \emph{Proc. Amer. Math. Soc}. \textbf{141}, 3291-3301
(2013).
\item \label{enu:A.A.-Klyachko,-M.A.}A.A. Klyachko, M.A. Can, S. Binicioglu,
A.S. Shumovsky: Simple test for hidden variables in spin-1 systems.
\emph{Phys. Rev. Lett. }\textbf{101}, 020403 (2008).
\item \label{enu:D.-Avis,-P.}D. Avis, P. Fischer, A. Hilbert, A. Khrennikov:
Single, complete, probability spaces consistent with EPR-Bohm-Bell
experimental data. In A. Khrennikov (Ed) \emph{Foundations of Probability
and Physics-5, AIP Conference Proceedings} \textbf{750}: 294-301 (Melville,
New York, AIP, 2009).
\item \label{enu:E.N.-Dzhafarov,-Conditionalization}E.N. Dzhafarov, J.V.
Kujala: Embedding quantum into classical: Contextualization vs conditionalization.
arXiv:1312.0097 (2013).\end{enumerate}

\end{document}